\newtheorem{remark}{Remark}[section]
\newtheorem{theorem}{Theorem}[section]
\newtheorem{definition}[theorem]{Definition}
\newtheorem{proposition}[theorem]{Proposition}
\numberwithin{equation}{section}
\def\L{\mathcal{L}}
\definecolor{codegreen}{rgb}{0,0.6,0}
\definecolor{codegray}{rgb}{0.5,0.5,0.5}
\definecolor{codepurple}{rgb}{0.58,0,0.82}
\lstdefinestyle{mystyle}{
    backgroundcolor=\color{white},   
    commentstyle=\color{codegreen},
    keywordstyle=\color{magenta},
    numberstyle=\tiny\color{codegray},
    stringstyle=\color{codepurple},
    basicstyle=\ttfamily\footnotesize,
    breakatwhitespace=false,         
    breaklines=true,                 
    captionpos=b,                    
    keepspaces=true,                 
    numbers=left,                    
    numbersep=5pt,                  
    showspaces=false,                
    showstringspaces=false,
    showtabs=false,                  
    tabsize=2,
    upquote=true,
    columns=fullflexible
}
\author[E. Poimenidou]{Eirini Poimenidou}
\address{E. Poimenidou, School of Informatics, Aristotle University of Thessaloniki, Greece \& CERN, Geneva, Switzerland}
\email{epoimeni@csd.auth.gr, epoimeni@cern.ch}
\author[M. Adamoudis]{Marios Adamoudis}
\address{M. Adamoudis, }
\email{marios.p7@hotmail.com} 
 \author[K. A. Draziotis]{Konstantinos A. Draziotis}
\address{K. A. Draziotis, School of Informatics, Aristotle University of Thessaloniki, Greece}
\email{drazioti@csd.auth.gr}
 \author[K. Tsichlas]{Kostas Tsichlas}
 \address{K. Tsichlas, Department of Computer Engineering and Informatics, University of Patras, Greece}
\email{ktsichlas@ceid.upatras.gr}
\begin{document}
	\title{Message Recovery Attack in NTRU through VFK Lattices}
\keywords{Public Key Cryptography; NTRU Cryptosystem; Lattices; LLL algorithm; Closest Vector Problem; Voronoi First Kind;}
\subjclass[2010]{94A60}
\maketitle	
	\begin{abstract}

 In the present paper, we implement a message recovery attack to all variants of the NTRU cryptosystem.
 Our approach involves a reduction from the NTRU-lattice to a Voronoi First Kind lattice, enabling the application of a polynomial CVP exact algorithm crucial for executing the Message Recovery. The efficacy of our attack relies on a specific oracle that permits us to approximate an unknown quantity.  Furthermore, we outline the mathematical conditions under which the attack is successful. Finally, we delve into a well-established polynomial algorithm for CVP on VFK lattices and its implementation, shedding light on its efficacy in our attack.  Subsequently, we present comprehensive experimental results on the NTRU-HPS and the NTRU-Prime variants of the NIST submissions and propose a method that could indicate the resistance of the NTRU cryptosystem to our attack.  
 	\end{abstract}

\section{Introduction}
In 1996, Hoffstein, Pipher, and Silverman developed the NTRU cryptosystem, aiming to create robust encryption and signature systems, as detailed in \cite{hoffstein}.
 Its security is based on the difficulty of solving a system of linear equations over polynomial rings, a problem that is expected to remain hard even with quantum computers. While it is not equivalent to the Shortest and Closest Vector Problems (SVP/CVP) in lattices, the NTRU cryptosystem has withstood over 25 years of cryptanalysis, and variants of it have been shown to be secure under the Ring Learning With Error (R-LWE) hardness assumption. NTRU is known for its exceptional performance and moderate key-size, making it a popular choice for embedded cryptography. It has been standardized by IEEE, X9.98, and PQCRYPTO, and was a finalist in the NIST post-quantum cryptography standardization effort. Ultimately, the Kyber algorithm was selected for standardization. It is also noted in the 3rd Status Report of NIST for PQC\footnote{\url{https://nvlpubs.nist.gov/nistpubs/ir/2022/NIST.IR.8413.pdf}} that the overall performance of any of these KEMs (i.e. Kyber, Saber and NTRU) would be acceptable for general-use applications. In addition, NTRU-HRSS was chosen by Google for their ALTS protocol (ALTS : Application Layer Transport Security) in the hybrid model\footnote{the url link is \href{https://cloud.google.com/blog/products/identity-security/why-google-now-uses-post-quantum-cryptography-for-internal-comms}{here}}, and NTRU-Prime \cite{ ntru-prime-submission-3} was implemented in openssh 9.0 in a hybrid scheme with EC-Diffie-Hellman, see \cite{openssh}.

In this paper, we present a message recovery attack applicable to all NTRU variants, assuming a specific oracle. We detail the oracle's function and outline the mathematical conditions under which our attack proves successful. We reduce the message recovery problem to a CVP problem over a Voronoi First Kind (VFK) lattice, where a polynomial algorithm exists both for Shortest Vector and Closest Vector problems. To the best of our knowledge, the association between the NTRU problem and VFK lattices does not exist in the current bibliography.

We continue with an exposition of NTRU-HPS, the initial variant of NTRU, followed by our contribution and a review of prior research on the topic.

\subsection{NTRU-HPS cryptosystem}\label{sec:ntru}
Let the polynomial ring ${\mathcal{R}}={\mathbb{Z}}[x]/\langle D(x) \rangle$ for some $D(x)\in {\mathbb{Z}}[x]$ and $\langle D(x) \rangle$ the ideal generated by $D(x).$ We write $*$ for the multiplication in the ring. Also, fix a polynomial $h(x)\in {\mathbb{Z}}[x]$ of degree $N-1.$ We set, 
\begin{equation}\label{B_h}
B_h = \begin{pmatrix}
-\ h(x)\ - \\
-\ x*h(x)\ - \\
\vdots\\
-\ x^{N-1}*h(x)\ -
\end{pmatrix},
\end{equation}
where with $x^{i}*h(x),$ we mean the vector with coordinates the coefficients of the polynomial $h(x),$ after multiplication in ${\mathcal{R}}$ with $x^i.$ In expressing the coefficient vector of $h(x)=a_{N-1}x^{N-1}+\cdots + a_0,$ we denoted as ${\textbf h}=(a_0,...,a_{N-1}).$
Then, the multiplication $g(x)*h(x)$ in ${\mathcal{R}}$ can be represented as the multiplication of the row matrix $[{\textbf g}]$ and matrix $B_h,$ i.e., $[{\textbf g}]B_h.$ 

The set 
$${\mathcal{L}}_h=\{(f(x),g(x))\in {\mathcal{R}}^2 : g(x)=f(x)*h(x)\},$$ where $h(x)$ has degree say $N-1$ is a lattice. Indeed,
$${\mathcal{L}}_h={\mathbb{Z}}^{2N}B_h',$$ where $B'_h$ is the block matrix,
$$	\left[\begin{array}{c}
			 B_h  \\
			\hline
			qI_N   \\
		\end{array}\right].
			$$
If we consider the previous lattice, but taking ${\mod{q}}$ (for some positive $q$), we get a (NTRU type) lattice
$${\mathcal{L}}_h^q=\{(f(x),g(x))\in {\mathcal{R}}^2 : g(x)=f(x)*h(x)\pmod{q}\}, $$ where we also write it as,
$${\mathcal{L}}_h^q=\{({\textbf f},{\textbf g})\in {\mathbb{Z}}^{2N} : [{\textbf g}]=[{\textbf f}]M_h\}$$
where 
$$	M_h=\left[\begin{array}{c|c}
			I_N & B_h  \\
			\hline
			{\textbf 0}_N & qI_N   \\
		\end{array}\right].
			$$
The lattice ${\mathcal{L}}_h$ has several interesting properties when $B_h$ is a cyclic matrix (see (\ref{B_h})). One example is when we choose $D(x)=x^N-1$. In this case, if $({\textbf a},{\textbf b})$ is a vector in the lattice, then performing a cyclic permutation of ${\textbf a}$ and ${\textbf b}$ $k-$times will result in another vector in the lattice. On the other hand, if $D(x)=x^p-x-1$ (the case of NTRU-Prime), then $B_h$ is not circulant.

For this Section, we shall proceed with the assumption that $D(x)=x^N-1$ and $p=3.$ We note that the other variants of NTRU follow a similar pattern for Encryption \& Decryption.
Alice selects public parameters $(N, p, q, d)$, with $N$ and $p=3$ being prime numbers, and both co-prime to $q.$ Usually $N$ and $q$ are large, and $q$ is a power of $2.$	
	The choice of a prime degree parameter $N$ is essential to thwart Gentry's attack, which is effective when $N$ is composite, see \cite{gentry}. 
	
	Moreover, we define the set of ternary polynomials ${\mathcal{T}}_{\alpha}$  of degree $\alpha$, as the set of polynomials with coefficients from the set $\{-1,0,1\}$ and degree at most $\alpha.$
	With ${\mathcal{T}}(d_1,d_2)\subset {\mathcal{R}},$ we denote the polynomials of ${\mathcal{R}}$ with $d_1$ entries equal to one, $d_2$ entries equal to minus one and the remaining entries are zero.	
 %When we write ${\mathcal{L}}_F$ for some polynomial $F\in {\mathcal{R}},$ we mean a subset of ${\mathcal{T}}_{\alpha}$ for some $\alpha$ relatively small with respect to $q.$

For her private key, Alice randomly selects $({\bf f}(x),{\bf g}(x))$ such that ${\bf f}(x) \in {\mathcal{L}}_f$ and ${\bf g}(x) \in {\mathcal{L}}_g$. It is important that ${\bf f}(x)$ is invertible in both ${\mathcal{R}}/q$ and ${\mathcal{R}}/3$. The inverses in ${\mathcal{R}}/3$ and ${\mathcal{R}}/q$ can be efficiently computed using the Euclidean algorithm and Hensel's Lemma, see \cite[Proposition 6.45]{hoffstein}. Let ${\bf F}_q(x)$ and ${\bf F}_3(x)$ represent the inverses of ${\bf f}(x)$ in ${\mathcal{R}}/q$ and ${\mathcal{R}}/3$, respectively.	
	
 Alice next computes
	\[{\bf h}(x) = {\bf F}_q(x) \star {\bf g}(x)\mod{q}.\]	The polynomial ${\bf h}(x)$ is Alice's public key.

The problem of distinguishing ${\bf h}(x)$ from uniform elements in ${\mathcal{R}}/q$ is called {\it decision NTRU problem.} While, the problem of finding the private key $(f(x), g(x))$ is referred to as the {\it search NTRU problem}.
	Bob's plaintext is a polynomial ${\bf m}(x) \in {\mathcal{R}},$ whose coefficients are in the set $\{-1,0,1\}.$ Thus, the plaintext ${\bf m}(x)$ is the centerlift of a polynomial in ${\mathcal{R}}/3$. Bob chooses a random ephemeral key	${\bf r}(x) \in {\mathcal{L}}_r$ and computes the ciphertext,
	\begin{equation}\label{definition_of_encryption}
	{\bf c}(x) \equiv p{\bf r}(x) \star {\bf h}(x) + {\bf m}(x) \, \bmod\,  q.
	\end{equation}
	Finally, Bob sends to Alice the ciphertext
	${\bf c}(x)\in {\mathcal{R}}/q.$
	
	To decrypt, Alice computes $${\bf v}(x) \equiv {\bf f}(x) \star {\bf c}(x) \, \bmod\,  q.$$
Then, she centerlifts ${\bf v}(x)$ to an element of ${\mathcal{R}},$ say ${\bf v}'(x),$  and she finally computes,
	\[{\bf b}(x) \equiv {\bf F}_3(x) \star {\bf v}'(x) \, \bmod\,  3.\] 
Therefore, ${\bf b}(x)$ is equal to the plaintext ${\bf m}(x)$ (this is true when a simple inequality between $d, q,$ and $N$ is satisfied).

We now briefly describe two well-known attacks based on lattices.
\subsubsection{SVP and NTRU}\label{SVPNTRU}
	Let the lattice $L_{\bf h}$ be generated by the rows of the matrix
	$$M_{\bf h}=
		\left[\begin{array}{c|c}
			I_N & C({\bf h})  \\
			\hline
			{\bf 0}_N & qI_N   \\
		\end{array}\right],
			$$ 
where $C({\bf h})$ is the matrix generated by the vector ${\bf h}$ (the public key). For the definition see (\ref{B_h}).  
			This matrix is public, since it contains the public key of the NTRU cryptosystem. From ${\bf f}(x)\star {\bf h}(x)\equiv {\bf g}(x)\pmod{q},$ there is a polynomial ${\bf b}(x)\in R$ such that $ {\bf f}(x)\star {\bf h}(x)- q{\bf b}(x)= {\bf g}(x),$ so considering polynomials as vectors we get ${\bf f}C({\bf h})-q{\bf b}={\bf g},$ thus $({\bf f},{-\bf b})M_{\bf h} = ({\bf f},{\bf g}).$ That is, $({\bf f},{\bf g})\in L_{\bf h}.$
	We see that  
	\begin{equation}\label{definition_of_lattice}
	L_{\bf h} = \{ ({\bf u}(x),{\bf v}(x))\in R^2 : {\bf u}(x)\star {\bf h}(x) \equiv {\bf v}(x) \pmod{q}\}
	\end{equation}
	or 
	\[
	L_{\bf h} = \{ ({\bf u},{\bf v})\in {\mathbb{Z}}^{2N} : {\bf u}C({\bf h}) = {\bf v} \}.
	\]
	Thus, the problem of finding the private key $({\bf f},{\bf g})\in {\mathbb{Z}}_q^{2N},$ comes down to find a short vector in the integer lattice $L_{\bf h}.$ Note that, $L_{\bf h}$ is a $q-$ary lattice, i.e. $q{\mathbb{Z}}^{2N}\subset L_{\bf h}.$ 

\subsubsection{CVP and NTRU}\label{CVPNTRU}

In this Section, we shall define the Closest Vector Problem (CVP)~\cite{gal} and its approximate variation, which is the problem we will base our attack on the NTRU system. We will also provide Babai's algorithm that solves the approximation CVP problem.

\begin{definition}
\textbf{Closest Vector Problem {\rm{(}}$CVP{\rm{(}}\L,t{\rm{)}}{\rm{)}}$}.  Given a lattice $\mathcal{L}\subset {\mathbb{Z}}^m$ of rank $n$ and a vector $\textbf{t} \in {\mathbb{R}}^m$, $\textbf{t} \notin \L$, find a vector $\textbf{u} \in \mathcal{L}$ such that, for every ${\textbf{v}} \in \mathcal{L}$ we have:
$$\| \textbf{u} - \textbf{t} \| \leq \| \textbf{v} - \textbf{t} \|$$
\end{definition}
The closest vector problem is known to be NP-hard (\cite[Chapter 3]{COMPLEXITY_OF_LATTICE_PROBLEMS}). We will now define the approximate Closest Vector Problem $(CVP_{\gamma_{n}}({\mathcal{L}},{\textbf t}))$ as follows:

\begin{definition}
\textbf{approximate-Closest Vector Problem {\rm{(}}$CVP_{\gamma_{n}}({\mathcal{L}},{\textbf t}{\rm{)}}${\rm{)}} }. \\ Given a lattice $\mathcal{L}\subset {\mathbb{Z}}^m$ of rank $n$ and a vector ${\textbf t}\in {\mathbb{R}}^m,$ find a vector ${\textbf u}\in \mathcal{L}$ such that, for every ${\textbf v}\in \mathcal{L}$ we have:
$$ \| \textbf{u} - \textbf{t} \| \leq \gamma_n \| \textbf{v} - \textbf{t} \|\  ({\rm{for \ some}} \ \gamma_n > 1).$$
\end{definition}
We say that we have a CVP oracle if we have an efficient probabilistic algorithm that solves ${\rm CVP}_{\gamma_n}$ for $\gamma_n=1.$ To solve 
${\rm{CVP}}_{\gamma_n}$, we usually use Babai's algorithm \cite[Chapter 18]{gal}, \cite{mar1} (which has a polynomial running time). In fact, combining this algorithm with LLL algorithm, we solve ${\rm{CVP}}_{\gamma_n}(\mathcal{L},{\textbf t})$ for some lattice $\mathcal{L}\subset {\mathbb{Z}}^m,$ for $\gamma_n = 2^{n/2}$ and $n=rank(\mathcal{L})$ in polynomial time. The pseudocode for Babai's algorithm is given in algorithm \ref{alg:babai}.
 
\RestyleAlgo{ruled}
\SetKwComment{Comment}{/* }{ */}
\begin{algorithm}[hbt!]
\caption{Babai's Nearest plane Algorithm}\label{alg:babai}
\SetKwInOut{Input}{input}
\SetKwInOut{Output}{output}
\Input{A $n\times m$-matrix $M$ with rows the vectors of a basis $\mathcal{B}=\{{\textbf b}_i\}_{1\leq i\leq n}$ $\subset\mathbb{Z}^{m}$ 
of the lattice $\mathcal{L}$ and a vector ${\textbf t}\in {\rm{span}}({\mathcal{B}})$}
\Output{{${\textbf x}\in L$ such that $\|{\textbf x}-{\textbf t}\|\leq 2^{n/2}dist({\mathcal{L}},{\textbf t})$.}}\ \\

$M^{*} = \{({\textbf b}_j^*)_j \}\leftarrow GSO(M)$\Comment*[r]{\small{GSO: Gram-Schimdt Orthogonalization}} 
${\textbf b} \gets {\textbf t}$\;
\For{$j=n$ {\rm{to}} $1$}{
    $c_j \gets \Big{\lfloor} \frac{{\textbf b}\cdot 
	{\textbf b}^*_j}{\|{\textbf b}^*_j\|^2} \Big{\rceil}$ \Comment*[r]{$\lfloor x 
	\rceil = \lfloor x+0.5 \rfloor$}
    ${\textbf{b}} \gets {\textbf{b}} - c_{j}{\textbf b}_j$
  }
  \Return {\bf{t}}-{\bf{b}}
\end{algorithm}
There is a relation of CVP with NTRU. The Closest Vector Problem (CVP) is not used for private key recovery but for message retrieval. To illustrate how CVP aids in recovering the message ${\bf m}$, we begin by observing that $(3{\bf r}, {\bf c} - {\bf m})$ belongs to the lattice $L_{\bf h}$ (we set $p=3$). 
This connection stems from the lattice's definition, as given in (\ref{definition_of_lattice}), and the encryption of ${\bf m}$ as outlined in (\ref{definition_of_encryption}),
$$3{\bf r}\star {\bf h} = {\bf c}-{\bf m}\pmod{q},$$ 
which implies $(3{\bf r}, {\bf c} - {\bf m})\in L_{\bf h}$.
Now, let's express it as,
$$({\bf 0}_N,{\bf c}) = (3{\bf r} - 3{\bf r}, {\bf c}-{\bf m}+{\bf m})=(3{\bf r},{\bf c}-{\bf m}) + (-3{\bf r},{\bf m}). $$
In this equation, the vector $(3{\bf r}, {\bf c}-{\bf m})$ belongs to $L_{\bf h}$, and the vector $(-3{\bf r}, {\bf m})$ is relatively short. 
We set ${\bf t}=({\bf 0}_N,{\bf c})\in {\rm{span}}({\mathcal{L}}_{\bf h}), {\bf x}=(3{\bf r},{\bf c}-{\bf m})\in {\mathcal{L}}_{\bf h}.$ Thus,
$$\|{\bf x}-{\bf t}\|=\|(3{\bf r},-{\bf m})\|\le \sqrt{10N}.$$
%To quantify this shortness (where $r(x)$ and $m(x)$ are ternary polynomials of degree at most $N-1$), we find:
%$${\rm{dist}}\big( (p {\bf r},{\bf c} - {\bf m}),({\bf 0}_N,{\bf c})\big)=||(3{\bf r},{\bf c} - {\bf m})- ({\bf 0}_N,{\bf c})||=||(3{\bf r},-{\bf m})||\le \sqrt{10N}.$$ 
Therefore, in order to find ${\bf m},$ we apply CVP in $L_{\bf h}$ with the target vector $({\bf 0}_N,{\bf c}).$ Say ${\bf z}$ is the output. Then, if the attack succeeds, the last $N$ entries of ${\bf z}$ equal to ${\bf{c}}-{\bf{m}}.$
The previous analysis can also be developed for the other two variants of NTRU. 

In the previous analysis, we are not sure, even if we have a CVP-oracle, that $(p{\bf r},{\bf c}-{\bf m})$ is the closest point to ${\bf t}.$ In fact, it can be proven that ${\rm{dist}}({\bf x},{\bf t})\approx 4.5GH({\mathcal{L}_{\bf h}})$\footnote{GH: Gaussian Heuristic, where for the specific lattice $GH({\mathcal{L}_{\bf h}})=0.35\sqrt{q}$, see \cite[section 3]{adam_draz}} (we considered $q\approx 4N$ which is true for HPS variant). In practice, this attack works only for very small $N.$ Maybe here a CVP enumeration would be more useful. One other way to attack NTRU with CVP is to consider, instead of the previous ${\bf t}$, a better target vector which approximates $(3{\bf r},{\bf c}-{\bf m})$, and so then the CVP can have better chances to find ${\bf m}.$ We shall use the previous ideas in our attack.
\subsection{Previous work-Our Contribution}
\subsubsection{Previous work}
The NTRU cryptosystem was first targeted for a lattice attack in 1997 by Coppersmith \cite{CopSha97}. Later, Gentry developed an effective strategy, particularly useful when $N$ is composite, see \cite{gentry}. May, in \cite{May}, used his own class of lattices, called run-lattices, to solve similar issues. 

Silverman \cite{Silverman} built on May's idea and proposed a technique that involves choosing $r$ coefficients and reducing the dimension of the lattice to make them zero. A similar approach was taken by the authors of \cite{Gama-Nguyen}, who leveraged decryption failures to recover the secret key if the decryption oracle supported it.

Other techniques have been developed, such as reducing the NTRU problem to a multivariate quadratic system over a finite field with two elements using Witt vectors \cite{gerald,witt}. Odlyzko \cite{Odlyzko} proposed a meet-in-the-middle attack that divides the search space into two parts, resulting in lower time complexity but high memory requirements. Howgrave's hybrid attack \cite{hybrid attack} combines lattice reduction with a meet-in-the-middle algorithm, and has been widely studied by researchers to evaluate the security of lattice-based encryption methods.

To attack NTRU cryptosystem with a higher modulus than that in the NTRUEncrypt standard, similar methods were proposed independently by Albrecht, Bai, and Ducas \cite{Albrecht} and also by Cheon, Jeong, and Lee \cite{Cheon}. Kirchner \cite{Kirchner} demonstrated that time complexity is polynomial for $q=2^{\mathrm {\Omega }(\sqrt{n\log \log n})}$ in $\mathbb {Q}(\zeta _{2^n}).$ Lastly, Nguyen \cite{Nguyen - Boosting the hybrid} enhanced and clarified the hybrid and meet-in-the-middle attacks. The subfield attack variation proposed in this paper is superior to the earlier methods, but not as effective as the hybrid attack.

In 2005, Silverman, Smart, and Vercaturen introduced another line of research using Witt's vectors, see \cite{witt}, later expanded by Bourgeois and Faugère, see \cite{gerald}. 

In their recent work, documented in the paper \cite{may_recent}, the authors present a new approach to tackling the latest iterations of the NTRU encryption scheme. Their method involves utilizing a carefully constructed lattice and employing the BKZ algorithm in conjunction with the lattice sieving algorithm from the G6K library. A key highlight of their research lies in the compelling advantages derived from transitioning away from the traditional Coppersmith-Shamir lattice towards a basis rooted in the cyclotomic ring. This strategic shift yields remarkable results, as evidenced by their ability to decrypt the HPS-171 instance in a mere $83$ core days, utilizing the cyclotomic ring basis, as opposed to the $172$ core days required when using the Coppersmith-Shamir basis.  Additionally, the authors take on another official NTRU challenge, specifically one with $N = 181,$ which had been established by Security Innovation, Inc. Their approach allows them to successfully crack this challenge in $20$ core years.
\subsubsection{Our contribution}
In \cite{adam_draz} the authors provide a message recovery attack based on CVP. 
The idea was to shift the CVP attack from its original setting of $[{\mathcal{L}}_{\bf h},{\bf t}=({\bf 0}_N,{\bf c})]$ to a different CVP instance  $[{\mathcal{L}}_{{\bf{\alpha}}},{\bf t}'],$ where ${\bf{\alpha}}$ represents a polynomial ${\alpha}(x)$ that we have the freedom to select. By suitably choosing certain polynomials ${\alpha}(x),$ the authors successfully executed message recovery attacks on NTRU-HPS.  While an oracle was used in the attack (which poses a drawback for the method), employing side-channel attacks might aid in initiating such an oracle. The authors also used an approximation version of CVP through Babai's algorithm.

In the present paper, we first replace the approximate CVP-algorithm of \cite{adam_draz} with an exact CVP-algorithm, which allows for more precise and better results of our attack. This algorithm will run in polynomial time. The main idea is to reduce our attack to a lattice, which is Voronoi First Kind (VFK), in which there is an exact CVP polynomial algorithm, \cite{VFK-CVP}. Additionally, we apply our attack to a different lattice than in \cite{adam_draz}, in fact we consider the case where $\alpha(x)$ is a constant.

Secondly, our attack is equally applicable to all three variants (HPS, HRSS, Prime) without any dependence on the underlying ring structure. In fact, the target vector effectively encapsulates all the essential information for the specific NTRU instance.

Thirdly, we executed our attack on both NTRU-HPS and NTRU-Prime, detailing our findings in Section \ref{sec:experiments}. Our results demonstrate substantial improvements over the prior experiments outlined in \cite{adam_draz}. Although we didn't implement the attack for NTRU-HRSS, our results are akin to those for NTRU-HPS.

%Additionally, the use of VFK lattices, we observe an improvement in the simplest Babai attack. This is particularly evident in the variant \texttt{ntruhps2048509}, where our method is shown to be superior to Babai's attack in the same lattice. These advancements significantly  enhances the previous attack \cite{adam_draz}. 
Furthermore, our research has uncovered a non-trivial connection between NTRU lattice and VFK lattices, which enhances our understanding of NTRU encryption system. Despite relying on an oracle, which we'll elaborate on shortly, the proposed attack offers a way to evaluate the resilience of NTRU-based cryptosystems against message recovery attacks. Our attack includes a parameter that, when large, indicates a system's weakened resistance. By conducting experiments on NTRU cryptosystems and estimating this parameter, say $R_0$, we can effectively grade the system's security level against these types of attacks.

Finally, the lattice we used in our attack is independent of NTRU parameters, such as the public key $h(x)$. This characteristic allows for a singular pre-calculation. For instance, computing a superbasis is a one-time task, regardless of the message changes. On top of that, we also delved into the CVP variant tailored for VFK lattices. Our study led us to develop both pseudocode and a Sagemath implementation \cite{sage} (see Appendix \ref{appendixA}), which we present herein.
\subsection{Roadmap}

Section \ref{sec:VFK} provides a brief overview of integer lattices and further offers essential background information on Voronoi First Kind lattices, as well as an overview of the polynomial algorithm that solves the CVP problem on VFK lattices. In Section \ref{ntru-variants}, we delve into the details of the NTRU encryption system and its diverse variants. Section \ref{sec:Auxiliary} presents critical auxiliary results that underpin our subsequent discussion. In Section \ref{sec:attack} we outline our proposed attack strategy. Then, in Section \ref{sec:experiments}, we detail our experimental methodology and findings. Finally, the concluding Section summarizes key insights and contributions.

Our work's corresponding implementation can be found at \url{https://github.com/drazioti/ntru_msg_recovey_attack}.
\section{Voronoi First Kind Lattices}\label{sec:VFK}

In this Section, we recall some well-known facts about lattices which form the 
background to our algorithms.

Let ${{\textbf{b}}_1,{\textbf{b}}_2,\ldots,{\textbf{b}}_n}$ be linearly independent vectors of 
${\mathbb{R}}^{m}$.
The set 
\[\mathcal{L} = \bigg{\{} \sum_{j=1}^{n}\alpha_j{\textbf{b}}_j :
 \alpha_j\in\mathbb{Z}, 1\leq j\leq n\bigg{\}}\]
is called  a {\em lattice} and 
the set $\mathcal{B} = \{{\textbf{b}}_1,\ldots,{\textbf{b}}_n\}$ a basis of 
$\mathcal{L}$. 
All the bases of $\mathcal{L}$ have the same number of elements, $n,$ which is  called
 {\em dimension} or {\em rank} of $\mathcal{L}$. If $n=m$, then
the lattice $\mathcal{L}$ is said to have {\em full rank}. 
We denote by $M$ the $n\times m$-matrix  having as rows the vectors 
${\textbf{b}}_1,\ldots,{\textbf{b}}_n$. 
If $\mathcal{L}$ has full rank, then the {\em volume} of the lattice
 $\mathcal{L}$ is defined to be the positive number
 $|\det{M}|$ which is independent from the basis $\mathcal{B}$.
 If ${\textbf v}\in \mathbb{R}^m$, then $\|{\textbf v}\|$ denotes, as usually, the
 Euclidean norm of ${\textbf v}$. 
 We denote by $LLL(M)$, the application of well-known LLL-algorithm on the rows of $M$.
 Finally,   we denote by 
$\lambda_1(\mathcal{L})$ the smallest of the lengths of vectors in 
$ \mathcal{L}-\{ \textbf 0\}$.

First, we shall define the Voronoi Cell and the relevant vectors of a Voronoi First Kind (VFK) Lattice, and afterwards we will express the criteria under which a lattice can be considered as VFK. Finally, we will describe an efficient algorithm~\cite{VFK-CVP} to solve the CVP problem for VFK Lattices.

\begin{definition}
% cite https://arxiv.org/pdf/1512.00720.pdf, page 2
\textbf{Voronoi cell Vor($\L$)}: Given a lattice $\L \subset \mathbb{R}^m$, the Voronoi cell $Vor(\L)$, is the subset of $\mathbb{R}^m$ containing all points closer or of equal distance to the lattice point at the origin than to any other lattice point (with respect to the given norm):
$$ Vor(\L) = \big\{ \textbf{x} \in \mathbb{R}^m: \|\textbf{x}\| \leq \| \textbf{x}-\textbf{v} \|, \ \forall \textbf{v} \in \L \backslash \{\bf{0}\} \big\},$$
\end{definition}

A lattice point $\textbf{v}$ is called relevant if and only if $\textbf{0}$ and $\textbf{v}$ are the only closest vectors to $\frac{1}{2} \textbf{v}$, see  \cite{voronoi}.
%Vectors $\textbf{v}$ for which there exists a point that is closer to $ \textbf{v}$ and the origin ${\bf 0}$ than to any other lattice vector, are called \textbf{relevant vectors} and are denoted by $Rel(\textbf{v})$. We are now ready to define the VFK Lattice.

\begin{definition}
(\textbf{Voronoi First Kind Lattice}). A lattice $\L$ of dimension $n$ is called Voronoi First Kind (VFK) if and only if there is a lattice basis $\{{\textbf b}_i\}_{i=1,\dots,n}$ and a lattice vector ${\textbf b}_{0}$ such that:\\\\
$
\ ({\rm{\bf{i}}}) \sum_{i=0}^{n}{\textbf b}_i ={\textbf 0}\ (\text{superbasis\ condition)}
$ \ \\
$
({\rm{\bf{ii}}}) \ q_{ij} = {\textbf b}_i\cdot {\textbf b}_j \leq 0, \ \text{for} \ i,j=0,1,\dots,n \ \text{and} \ i\not =j\ (\text{obtuse condition}).
$

\end{definition}

The $n+1$ vectors $\{{\textbf b}_i\}_{i=0,...,n}$ make the obtuse superbasis of the lattice $\L$. The $\{q_{ij}\}$ are called the Selling parameters~\cite{SELLING}. An important and useful property of VFK lattices is the way their relevant vectors can be written as the sum of a set of their superbasis vectors~\cite{CONWAY}:
\begin{equation}
\sum_{i\in I}{\textbf b}_i,
\label{equation:VFK_REL}
\end{equation}
where $I$ is a strict subset of $ \{ 0,1,..,n \} $ and $I \neq\emptyset$. We can now start explaining the main algorithm that we used for our attack.

\subsection{CVP efficient algorithm for VFK lattices}

Let $\L$ be a $n$-dimension VFK lattice in $\mathbb{R}^m$, with an obtuse superbasis $\{{\textbf b}_i\}_{i=0,...,n}$ and \textbf{y} a point in ${\rm{span}}(\L)\subset \mathbb{R}^m.$ We also define $B$ be a $ (n+1) \times m $ matrix whose rows are represented by the obtuse superbasis vectors. Since ${\bf y}\in span({\mathcal{L}}),$ there exists $\textbf{z}\in {\mathbb{R}}^{n+1}$, such that $\textbf{y}=\textbf{z}B$. In order to solve the CVP problem in VFK lattices, we need to find a vector ${\bf w}=(w_0,w_1,...,w_n)$ that minimizes the norm:

\begin{equation}
\Big{\|} \textbf{y} - \sum_{i=0}^{n}{w_i \textbf{b}_i } \Big{\|}^2 = \Big{\|} (\textbf{z}-{\textbf{w}})B \Big{\|}^2\label{equation:VFK_CVP}
\end{equation}

McKilliam et al.~\cite{VFK-CVP} proposed an interesting iterative polynomial algorithm that solves the CVP. Assuming that $\textbf{x}_0$ is a lattice point in $\L$, the iterative procedure is the following:
\begin{align*}
\textbf{x}_{k+1} &= \textbf{x}_k + \textbf{v}_k, \\
\textbf{v}_k &= \underset {\textbf{v} \in {\rm{Rel}}(\L) \cup \{{\textbf 0}\}}{\rm{argmin}} \| \textbf{y}-\textbf{x}_k-\textbf{v} \|,
\end{align*}
where $\textbf{v}_k$ is the relevant vector $\textbf{v}$ that minimizes the norm $ \| \textbf{y}-\textbf{x}_k-\textbf{v} \|$, during the $k$-th iteration. There might be many relevant vectors that minimize this norm, but the choice does not change the outcome of the algorithm. It is proven by the same authors that after every iteration of the algorithm, a closer point is found; that is 
\begin{equation}
\|\textbf{y}-\textbf{x}_k \| > \| \textbf{y}-\textbf{x}_{k+1}\|. 
\end{equation}
After at most $n$ iterations of the algorithm, the closest vector will be found~(\cite[Propositions 3.1, 3.2]{VFK-CVP}). Choosing a $\textbf{x}_0$ that is a close approximation of the closest point can influence greatly the number of iterations the algorithm will perform but finding such a point can be computationally difficult~\cite{hard-approx-cvp}. Nevertheless, in \cite[chapter 5]{VFK-CVP}, it is proven that this part of the algorithm has a worst-case complexity of $\mathcal{O}(n)$ for VFK lattices. If we take into consideration the matrix $B$, the vector ${\bf y}$, and the property \eqref{equation:VFK_CVP}, then the iterative procedure becomes:
\begin{align*}
\textbf{x}_{k+1} &= \textbf{u}_{k+1}B, \\
\textbf{u}_{k+1} &= \textbf{u}_k + \textbf{t}_k, \\
\textbf{t}_k &= \underset {\textbf{t} \in \{0,1\}^{n+1}}{\rm{argmin}} \| (\textbf{z}-\textbf{u}_k-\textbf{t})B \|, 
\end{align*}
Finally, according to property \eqref{equation:VFK_REL}, a relevant vector $\textbf{v}$ can be written as:
\\
$$\textbf{v} = \sum_{i = 0}^{n} t_i\textbf{b}_i,$$ 
\\
where $t_i$ is either $0$ or $1$.

The maximum number of relevant vectors of a lattice is exponential $2^{n+1}-2$, thus excluding a brute force approach. Therefore, the minimization problem was handled in a different way.

% $v_k$ is found by computing a minimum cut in an undirected flow network. The graph is created because of the VFK lattice property \eqref{equation:VFK_REL}.

We set ${\bf p}=\textbf{z}-\textbf{u}_k=(p_i)$ and so the norm in \eqref{equation:VFK_CVP} becomes:
\begin{equation}
\begin{split}
\Big{\|} ({\bf p}-\textbf{t})B \Big{\|}^2 & =  \Big{\|} \sum_{i=0}^{n}{\textbf{b}_i(p_i-t_i)} \Big{\|}^2 \\ & = \sum_{i=0}^{n}\sum_{j=0}^{n}q_{ij}p_ip_j - 2\sum_{i=0}^{n}\sum_{j=0}^{n}q_{ij}p_j t_i + \sum_{i=0}^{n}\sum_{j=0}^{n}q_{ij}t_it_j,
\end{split}
\label{equation:VFK_CVP_MIN}
\end{equation}
where $q_{ij}={\bf b}_i\cdot {\bf b}_j$.

The first term of \ref{equation:VFK_CVP_MIN} is constant and does not affect the minimization problem, meaning that it can be ignored. In addition, by setting  $s_i = -2 \sum_{j=0}^{n}{q_{ij}p_j}$, the above quadratic form becomes:

%Let $s_i = - 2\sum_{i=0}^{n}\sum_{j=0}^{n}q_{ij}p_j$ and the above quadratic form becomes:
\begin{equation}
Q(t) = \sum_{i=0}^{n}s_i t_i + \sum_{i=0}^{n}\sum_{j=0}^{n}q_{ij}t_it_j.
\label{equation:VFK_QT}
\end{equation}

% add citation next to laplacian matrix [5,13]
% see and add citation 44,49,53,12 about the technique with the flow networks and mincuts
In \cite{VFK-CVP}  the authors have shown that a minimizer of $Q(t)$ can be found by computing a minimum cut in an undirected flow network, since there is a one-to-one correspondence between the obtuse superbasis of the VFK lattice and the Laplacian matrix. We recall the following definitions related to the maximum flow problem.

\begin{definition}\label{def:flowNetwork}
An undirected flow network is a graph $G$ with $n+3$ vertices $v_i$ and edges $e_{ij}$ connecting $v_i$ to $v_j$ with weight $w_{ij}$. The graph is undirected, therefore $w_{ij}=w_{ji}$. The first vertex, $v_0,$ is the source vertex while the last one, $v_{n+2}$, is the sink vertex.
\end{definition}

\begin{definition}\label{def:cut}
A cut is a subset $C \subset V$, where $V$ is the set of vertices in a graph, and its complement $\Bar{C} \subset V$, where $v_0 \in C, v_{n+2} \in \Bar{C}$.
\end{definition}

\begin{definition}\label{def:cutWeight}
The weight $W$ of a cut is the sum of the weights on the edges crossing from the vertices in $C$ to the vertices in $\Bar{C}$:
$$W = \sum_{i \in I} \sum_{j \in J} w_{ij},$$
where $I=\{ i | v_i \in C\}, J=\{ j | v_j \in \Bar{C}\}$.
\end{definition}

\begin{definition}\label{def:mincut}
Mincut is a pair $\{C,\Bar{C}\}$ that minimizes the weight $W$.
\end{definition}

Let's define an undirected flow network $G$, as shown in definition \ref{def:flowNetwork} and its minimum cut $C$ (definitions \ref{def:cut} and \ref{def:mincut})/ that has a weight $W$, see definition \ref{def:cutWeight}. It was shown in (\cite{VFK-CVP}, chapter 6) that the weight $W$ can be represented with the following quadratic form:

\begin{equation}
\begin{split}
F(t) & = \sum_{i=0}^{n}g_it_i - \sum_{i=0}^{n}\sum_{j=0}^{n}w_{ij}t_it_j, \\
g_i & = d_i + \sum_{j=0}^{n}w_{ij}, \\
d_i & = w_{i,n+2} - w_{0,i}.
\end{split}
\label{equation:VFK_FT}
\end{equation}
From the equivalence of \eqref{equation:VFK_QT} and \eqref{equation:VFK_FT} we can see that the weights of our graph $G$ can be calculated using the Selling parameters $q_{ij}$ and $s_i$:

\begin{equation}
\begin{split}
w_{i,j} & = -q_{ij} \ {\rm{(Selling \ parameters)}}, \\
s_i & = d_i + \sum_{j=0}^{n} w_{ij}
\end{split}
\label{equation:VFK_S}
\end{equation}
Finally, the weights can be chosen so that they are non-negative: 
\begin{equation}
\begin{split}
w_{0,i} & = 0 \ {\rm{if}} \ s_i < 0, \ {\rm{else}} \ w_{0,i} = - s_i \\
w_{i,n+2} & = s_i \ {\rm{if}} \ s_i \geq 0, \ {\rm{else}} \  w_{i,n+2} = 0
\end{split}
\label{equation:VFK_WEIGHTS}
\end{equation}
We create a vector $\textbf{t}=(t_0,...,t_n)$ from set $C$ such that:
\begin{equation}
t_i = 1, \ {\rm{if}} \ i \in C, \ {\rm{else}} \  t_i = 0
\label{equation:VFK_WEIGHTS_2}
\end{equation}

From the above discussion, we deduce that if we do a minimum cut in the graph $G$ with $n+3$ vertices and choose the weights of the edges according to \eqref{equation:VFK_S} and \eqref{equation:VFK_WEIGHTS}, then the vertices in $C$ determine the vector ${\bf t}$ of the iterative procedure. This minimization process requires $\mathcal{O}(n^3)$ operations, see ~\cite{intro-algos, mincut-problems}. Since the number of iterations is $\mathcal{O}(n)$, the total time complexity of the algorithm is $\mathcal{O}(n)*\mathcal{O}(n^3)=\mathcal{O}(n^4)$.

The pseudocode of the algorithm that we used for our experiments can be found in Appendix~\ref{appendixA}. We continue with the presentation of all NTRU variants.

\section{NTRU-KEM}\label{ntru-variants}
We first define a Key Encapsulation Mechanism (KEM), which consists from three algorithms.
\begin{itemize}
\item[{\bf{keyGen}}] A probabilistic algorithm that outputs a public key({\it{pk}}) and a secret key({\it{sk}}).
\item[{\bf{Encap}}] A probabilistic algorithm that takes as input the public key and outputs a ciphertext($c$) and a shared secret($k$)\footnote{The ciphertext is called {\it{encapsulation of the key} $k$.}}. 
\item[{\bf{Decap}}] A probabilistic algorithm that takes as input the ciphertext and the secret key, and outputs the shared secret. 
\end{itemize}
Let $q,N$ be positive integers with $N$ prime and $D(x)$ be a polynomial of degree $N.$ 
We set \\\\
$-$ ${\mathcal{R}}={\mathbb{Z}}[x]/\langle D(x) 
\rangle$, ${\mathcal{R}/3}={\mathbb{Z}}_3[x]/
\langle D(x) \rangle$ and  
${\mathcal{R}}/q={\mathbb{Z}}_q[x]/\langle D(x) 
\rangle$. \\\\
$-$ ${\mathcal{S}}={\mathbb{Z}}[x]/\langle \Phi_N(x) 
\rangle$, ${\mathcal{S}/3}={\mathbb{Z}}_3[x]/\langle \Phi_N(x)  \rangle$ 
and  ${\mathcal{S}}/q={\mathbb{Z}}_q[x]/\langle \Phi_N(x) \rangle.$\\\\
A ternary polynomial is one that has as coefficients only the integers $-1,0,1.$ With 
${\mathcal{T}}_a$ we denote the set of ternary polynomials of ${\mathcal{R}}$ with degree at most $a$ and $\mathcal{T}_a(d_1,d_2)\subset {\mathcal{T}}_{a}$ consists from elements of ${\mathcal{T}}_{a}$ with $d_1$ coefficients equal to $1$ and $d_2$ equal to $-1.$ Furthermore, with ${\mathcal{T}}_a(w), w\in{\mathbb{Z}}_{>0}$ we denote the ternary polynomials which have $w$ non-zero coefficients. With ${\mathcal{T}}_{a,+}$ we denote the subset of ${\mathcal{T}}_a$ consisting from polynomials $f(x)=\sum_{i=0}^{\alpha}v_ix^i$ such that $\sum_{i=0}^{a-1}v_iv_{i+1}\ge 0.$	We also have four sample spaces,
${\mathcal{L}}_f, {\mathcal{L}}_g, {\mathcal{L}}_r,$ and ${\mathcal{L}}_m.$
We set, $\Phi_1(x)=x-1$ and in the case of $D(x)=x^N-1,$ we set $\Phi_N(x) = D(x)/\Phi_1(x).$  
\subsection{Generation of Public key}
We choose the {\it {secret key}}, $(g(x), f(x))$ from ${\mathcal{L}}_g$ and ${\mathcal{L}}_f$, respectively, which both are suitable subsets of ${\mathcal{T}}_{a}$ (for some $a$). Also $g(x)$ is invertible in ${\mathcal{R}}/3.$  The {\it {public key}} is $h(x)=g(x)/(\varepsilon f(x))$ invertible in ${\mathcal{R}}/q,$ where $\varepsilon=1/3$ or $3.$ Since there are many different sample spaces in the bibliography, here for NTRU-HPS and NTRU-HRSS, we follow \cite{nist2-ntru} (also for HRSS see \cite{HRSS,HRSS-2}) and for NTRU-Prime we follow \cite{ntru-prime}. In all the flavors we ignore encoding/decoding operations, except in NTRU-prime where we use a simple encoding scheme.
\subsection{Sample spaces}
\subsubsection{NTRU-HPS} $N$ is prime, and $q\le 16N/3+16$ which is a power of two, $\varepsilon=1/3,$ $D(x)=x^N-1.$\\
$-$ ${\mathcal{L}}_m={\mathcal{L}}_g=
{\mathcal{T}}_{N-2}(\frac{q}{16}-1,\frac{q}{16}-1),$\\
$-$ ${\mathcal{L}}_f={\mathcal{L}}_r={\mathcal{T}}_{N-2}.$
\subsubsection{NTRU-HRSS}\label{subsection:ntru-hrss} $N$ is prime and $q = 2^{\lceil 7/2+\log_2{N}\rceil} > 8\sqrt{2}(N-1)$, $\varepsilon=1/3,$ $D(x)=x^N-1,$\\
$-$ ${\mathcal{L}}_m={\mathcal{L}}_r={\mathcal{T}}_{N-2},$\\
$-$ ${\mathcal{L}}_f={\mathcal{T}}_{N-2,+},$\\
$-$ ${\mathcal{L}}_g=\{(x-1)*v(x) : v(x)\ {\text{in}}\ {\mathcal{T}}_{N-2,+} \}.$\\\\
Remark that NTRU-HRSS uses arbitrary weight sample spaces. For instance, in NTRU-HPS the message has weight $q/8-2,$ so the corresponding vector ${\bf m}$ has Euclidean length $\sqrt{q/8-2}.$
\subsubsection{NTRU-prime}
NTRU-prime is different from the two previous flavors, since in this case ${\mathcal{R}}$ is a field, whilst in the previous cases, it was a ring but not a field. We set $\varepsilon=3$ and let $w,p,q$ positive integers with $p,q$ primes and such that $w\ge 1, p\ge 1.5w, q\ge 16w+1.$ We set
$D(x)=x^{p}-x-1,$ which is irreducible $\mod{ q}.$ \\
$-$ ${\mathcal{L}}_g={\mathcal{T}}_{p-1},$ ${\mathcal{L}}_f={\mathcal{L}}_r={\mathcal{T}}_{p-1}(w).$ \\
Here we do not use ${\mathcal{L}}_m,$ but this set implicitly defined. We choose randomly $g(x)$ from ${\mathcal{L}}_g$ invertible in ${\mathcal{R}}/3,$ and $f(x)$ randomly from ${\mathcal{L}}_f$.
The public key is $h(x)=g(x)/(3f(x))$ in ${\mathcal{R}}/q.$ Remark that, $w>0$ and ${\mathcal{R}}/q$ is a field, thus polynomial $h(x)$ is invertible in ${\mathcal{R}}/q.$
 
\subsection{Encap/Decap for NTRU-HPS/HRSS}
 We provide Encapsulation/ Decapsulation algorithms for NTRU-HPS and NTRU-HRSS. The algorithm for these two flavors are the same and only the sample spaces are different. We define the function ${\rm{Lift}_3}:{\mathcal{L}}_m\rightarrow {\mathcal{S}}/3$ be the reduction in the ring ${\mathcal{S}}/3$ i.e the unique polynomial of degree at most $N-2$ with coefficients from the set $\{-1,0,1\}.$ For the HRSS the analogous function is ${\rm{Lift}'_3}:{\mathcal{L}}_m\rightarrow {\mathcal{S}}/3$ such that ${\rm{Lift}'_3}(a(x))=(x-1)*{\rm{Lift}_3}(a(x)/(x-1)).$ For a pseudocode that implements ${\rm{Lift}'_3}$, see \cite[Appendix 2 - Algorithm 8]{HRSS-2}.\\\\
 The {\it GenKey} is common to NTRU-Encrypt and NTRU-KEM. We first describe NTRU-Encrypt.\\\\
 {\it GenKey}. Takes as input a {\it{seed}} and outputs a quadruple $({\textbf h},({\textbf f},{\textbf f}_3,{\textbf h}_q)).$\\\\
\texttt{1.}   $(f(x),g(x))\xleftarrow{\$} {\mathcal{L}}_f\times {\mathcal{L}}_g$ \hspace{10pt}\\
\texttt{2.}  $f_q(x)\leftarrow f^{-1}(x)\mod{(q,\Phi_N(x))}$\\
\texttt{3.}   $f_3(x)\leftarrow f^{-1}(x)\mod{(3,\Phi_N(x))}$\\
\texttt{4.}   $h(x) \leftarrow 3g(x)*f_q(x)\mod{(q,D(x))}$ \ $\# D(x)=x^N-1$\\
\texttt{5.}  $ h_q(x) \leftarrow h^{-1}(x)\mod{(q,\Phi_N(x))}$\ \\  
\texttt{6.}  $ {\mathbb{S}} \xleftarrow{\$} \{0,1\}^{256}$\ \\ 
\texttt{7.}   {\textbf{return}} \ $(pk,sk)=({h(x)},({f(x)},{f_3(x)},{h_q(x)},{\mathbb{S}}))
 $\\\\
 {\it Encryption}. Takes as input the the public key $h(x)$ and the pair $(r(x),m(x)),$ where $r(x)\xleftarrow{\$}{\mathcal{L}}_r$ and returns the ciphertext $c(x)$.\\\\
 \texttt{1.} $m'(x)= {\rm Lift}_3(m(x))$ \ (or ${\rm{Lift}'_3}(m(x))$ for HRSS)\\
 \texttt{2.} $ c(x) \leftarrow h(x)*r(x) + m'(x) \mod{(q,\Phi_1\Phi_N)}$ \\
 \texttt{3.} {\textbf{return}} $c(x)$\\\\
 {\it Decryption}. Takes us input the secret key $\big(sk=(f(x),f_3(x),h_q(x)),c(x)\big)$ and returns the pair $(r(x),m(x))$ or $(0,0,1).$\\\\
\texttt{1.} {\textbf{if}} $c(x) \not\equiv 0 \mod{(q,\Phi_1)}$\\
\texttt{2.}  \hspace{10pt} {\textbf{return}}  \ $(0,0,1)$ \\
\texttt{3.} $ a(x) \leftarrow c(x)*f(x) \mod{(q,\Phi_1\Phi_N)}$\ \\
\texttt{4.} $ m(x) \leftarrow a(x)*f_3(x) \mod{(3,\Phi_N)}$\ \\
\texttt{5.} $m'(x) \leftarrow {\rm Lift}_3(m(x)) $\\
\texttt{6.} $r(x) \leftarrow (c(x)-m'(x))h_q(x) \mod{(q,\Phi_N)}$\\
\texttt{7.} {\textbf{if}} $(r(x),m(x)) \in {\mathcal{L}}_r\times {\mathcal{L}}_m$ {\textbf{then}}\\
\texttt{8.} \hspace{10pt} {\textbf{return}} $(m(x),r(x),0)$\\
\texttt{9.}  {\textbf{else}}\ \\
\texttt{10.}  \hspace{10pt} {\textbf{return}}  \ $(0,0,1)$ \ \\\\
From the previous design we get,
$${\texttt{Encrypt}}((r(x), m(x)), pk) = ct \Leftrightarrow {\texttt{Dec}}(ct(x), sk) = (r(x), m(x)).$$
We now describe the encpasulation-decapsulation functions.\\\\
{\it Encapsulation}. Takes as input a seed, the {\it{pk}} and outputs a pair, ciphertext $c(x)$ and shared secret $s.$\\\\
\texttt{1.} $(r(x),m(x)) \xleftarrow{\$} {\mathcal{L}}_r\times {\mathcal{L}}_m$ \\
\texttt{2.} $ c(x) \leftarrow {\rm{Encrypt}}({h(x)},{r(x)},{m(x)})$\\
\texttt{3.}  $s \leftarrow {\rm{hash}}({r(x)},{m(x)})$\\
\texttt{4.} {\textbf{return}} $({c(x)},s)$\\\\
 {\it Decapsulation}. It takes as input the ({\it{sk}}, ${\mathbb{S}}$) and the ciphertext, and outputs the shared secret $s.$\\\\
 \texttt{1.}  $({m(x)},{r(x)},fail) \leftarrow {\rm{Decrypt}}(sk,{c(x)})$\\
 \texttt{2.}  $ s \leftarrow {\rm hash}({r(x)},{m(x)})$\\
\texttt{3.} ${\textbf{if}} {\ \rm fail}=0\ {\textbf{then}}$ \\
\texttt{4.} \hspace{10pt} {\textbf{return}}  $s$ \\
\texttt{5.}  {\textbf{else}}\ \\
\texttt{6.}  \hspace{10pt} {\textbf{return}}  \ $hash({\mathbb{S}},c(x))$ \\
\ \\ {\it{fail}} is a boolean function that accepts $({\textbf c},{\textbf r},{\textbf m}),$ and returns $0$ if the decryption is correct.

\subsection{Encap/Decap for NTRU-Prime}\label{kem-ntru-prime} Let $p,q$ primes and $w=2t$ an even positive integer. Also, $D(x)=x^p-x-1.$ We need the following sets,
 if $q\equiv 1\pmod{3}$ we define
 $$A_{q,1}=\{-(q-1)/2,...,-6,-3,0,3,6,...,(q-1)/2\},$$ 
 and if $q\equiv 2\pmod{3},$ then
 $$A_{q,2}=\{-(q+1)/2,...,-6,-3,0,3,6,...,(q+1)/2\}.$$ 
 In both cases $A_{q,1}, A_{q,2}$ are subsets of $3{\mathbb{Z}}.$
  {\texttt{Round\_set}} is the set of all polynomials 
 $${\texttt{Round\_set}}=\bigg\{ \sum_{j=0}^{p-1} r_jx^j\in {\mathcal{R}}\ {\rm{with}}\ r_j\in A_{q,1}\ {\rm{if}}\ q\equiv 1 \pmod{3},$$ 
 $${\rm{else}}\ r_j\in A_{q,2}\ {\rm{if}}\ q\equiv 2 \pmod{3}\bigg\}.$$
 Also, we consider a function ${\texttt{closest}_3}$:${\mathbb{Z}}\rightarrow 3{\mathbb{Z}},$ where ${\texttt{closest}_3}(x)$ is the multiple of $3$ nearest to $x.$ I.e. is defined  
\[ {\texttt{closest}}_3(x) = 3\times {\rm{sgn}}(x) \big\lfloor|x/3|+0.5\big\rfloor.\] 
If we write the function ${\rm{sgn}}(x) \big\lfloor|x/3|+0.5\big\rfloor$ as ${\texttt{round}}(x)$ then
\[ {\texttt{closest}}_3(x) = 3{\texttt{round}}(x).\]
 Now, the function ${\texttt{Round}}:{\texttt{Round\_set}} \rightarrow 3{\mathbb{Z}}$ is such that ${\texttt{Round}} = {\texttt{closest}_3}\circ {\rm{Lift}}_q,$
where ${\rm{Lift}}_q:{\mathcal{R}}/q\rightarrow {\texttt{Round\_set}}$ and ${\texttt{closest}_3}$ acts in the coefficients. 
\begin{remark}\label{tricky} We remark that ${\texttt{Round}}(g(x))=g(x)+e(x)$ for some $e(x)$ ternary polynomial. To see this, consider the function $\mu:{\mathbb{Z}}\rightarrow \{\texttt{ternary polynomials}\}$ defined by $\mu(x)=x-{\texttt{closest}}_3(x).$ Extend $\mu$ to 
${\mathcal{R}}/q$ say 
$$\overline{\mu}:{\mathcal{R}}/q \rightarrow \{\texttt{ternary polynomials}\}$$ and 
$${\overline{\mu}}\bigg(\sum_{j=0}^{q-1} r_jx^j\bigg)=\sum_{j=0}^{q-1}\mu(r_j)x^j.$$ 
Then, $e(x)=-\overline{\mu}(g)(x).$
\end{remark}\ \\
{\it GenKey}. 
Takes as input a {\it{seed}} and outputs a triple $({\textbf h},({\textbf f},{\textbf g}_3)),$ where ${\textbf h}$ is the public key and $({\textbf f},{\textbf g}_3)$ the secret key.\\\\
\texttt{1.}  $f(x)\xleftarrow{\$} {\mathcal{L}}_f={\mathcal{T}}_{p-1}(w) $  \hspace{10pt} \# {\texttt{also} $T_{p-1}(w)$ \texttt{is called Short}}\\
\texttt{2.}  $g(x)\xleftarrow{\$} {\mathcal{L}}_g={\mathcal{T}}_{p-1}\hspace{10pt}\ \ \ \ \ \ \# \ {\texttt{with}}\ g(x)\  \texttt{invertible modulo}\ 3.$  \\
\texttt{3.}  $g_3(x)\leftarrow g^{-1}(x)\mod{(3,D(x))}$\\
%\texttt{4.} $f_q(x)\leftarrow f^{-1}(x)\mod{(q,D(x))}$\\
\texttt{4.}  $h(x) \leftarrow g(x)/(3f(x))\mod{(q,D(x))}$\\ 
\texttt{5.} {\textbf{return}} \ $(pk,sk)=({\textbf h},({\textbf f},{\textbf g}_3))$\ \\\\
{\it Encapsulation}.
\ \\
For encapsulation we do not follow exactly the code of NTRU-prime. For instance, we did not use the encoding/decoding algorithms but some simplification of these algorithms. In our attack we need the ciphertext, i.e., the polynomial $ct(x),$ which we have it (after decoding it) from the original NTRU-prime. If we manage to find $m(x)$ we can compute $r(x)$ and then the shared key.

Encapsulation  takes as input a seed and the public key and outputs a ciphertext $ct(x)$ and a shared key ${\textbf K}$.\\\\
\texttt{1.}  $r(x)\xleftarrow{\$} {\mathcal{L}}_r$ \hspace{10pt} \# {\texttt{here}}\ $L_r \texttt{ is } {\mathcal{T}}_{p-1}(w)$ \texttt{where} $w$ \texttt{is even}\footnote{$w=2t$ using the terminology of \cite{ntru-prime}.} \\
\texttt{2.} $c_1(x)\leftarrow h(x)*r(x)\mod{(q,D(x))}$ \\
\texttt{3.} $ct(x)\leftarrow {\texttt{Round}}(c_1(x)) \ \ \ \ \# {\texttt{ the coefficients are multiple of 3}}$\\
%\texttt{4.} ${\textbf C}\leftarrow ({\textbf c},Hash({\textbf r},{\textbf h}))\hspace{14pt} \# {\texttt{ciphertext}}$\\
\texttt{4.}  ${\textbf K}\leftarrow Hash[{\texttt{encode}}(r(x))||{\texttt{encode}}(ct(x))]  $ \hspace{10pt} \# {\texttt{shared key}}\\
\texttt{5.}  {\textbf{return}} \ $({\textbf K},ct(x))$\\
\ \\
Remark that there is no need for ${\mathcal{L}}_m$ since the message is implicitly defined in line 3. Indeed, from remark \ref{tricky} there is a ternary polynomial $m(x)$ such that $c(x)=c_1(x)+m(x)=h(x)*r(x)+m(x)$ in ${\mathcal{R}}/q$ (exactly as in the previous flavors). We can rewrite line 3 as follows:\\\\
\texttt{3a.} $m(x)\leftarrow c_1(x)-{\texttt{Round}}(c_1(x))$\\
\texttt{3b.} $c(x)\leftarrow h(x)*r(x)+m(x)$ \text{in}\ ${\mathcal{R}}/q.$\\
In fact this is the way we implement {\texttt{encap}} function.\\
\ \\
The {\texttt{decap}} accepts three inputs, the secret key, the public key and the ciphertext. 
First computes the nonce $r(x)$ and then, it is easy, by following the code of encapsulation, to find the message $m(x)$ and the shared secret ${\bf K}.$\\\\
{\it Decapsulation}. Input the ciphertext $ct(x),$ the public $h(x)$ and the secret key $(f(x),g_3(x))$.
\ \\
\texttt{1.}  $e'(x)\leftarrow 3f(x)*ct(x)\mod{(q,D(x))}$ \\
\texttt{2.}  $e''(x)\leftarrow Lift_q(e'(x))$\\
\texttt{3.}  $e(x)\leftarrow e''(x)\mod{3}$\\
\texttt{4.}  $r''(x) \leftarrow e(x)*g_3(x) \mod{3}$ \\
\texttt{5.}  $r'(x) \leftarrow Lift_3(r''(x))$  \\
\texttt{6.} $c'(x) \leftarrow {\texttt{Round}}(h(x)*r'(x))$\\ 
\texttt{7.}${\textbf{If}}\ c'(x)=ct(x)\ {\textbf{then}}$ \\
\texttt{8.} \hspace{9pt} ${\textbf{return}}\ Hash[{\texttt{encode}}(r'(x))||{\texttt{encode}}(c'(x))] $
\section{A vfk lattice related to ntru lattice}\label{sec:Auxiliary}
	Let $k, N$ and $q$ be positive integers.	
We set 	
	$$M_k=
		\left[\begin{array}{c|c}
		I_N & -kI_N  \\
		\hline
		{\textbf 0}_N & qI_N   \\
		\end{array}\right].
		$$ 
Let ${\mathcal{L}}_{k}$ be the lattice generated by the rows of $M_k.$ 
Before we present a basic result (see the Proposition below) we remark that if $k=q$ the lattice is VFK. Indeed, we can check that the rows produce an obtuse superbasis,
\[{\textbf v}_i\cdot {\textbf v}_j= 0\ {\rm and}\ {\textbf v}_i\cdot {\textbf v}_{N+j}\le 0,\ (1\le i<j\le N),\]
and ${\textbf v}_0=-{\textbf v}_1-\cdots-{\textbf v}_{2N}=(-1,-1,...,-1;0,0,...,0),$ is such that 
$${\textbf v}_0\cdot {\textbf v}_j\le 0\ \ (j=1,2,...,2N).$$

\begin{proposition}
If $k\leq \sqrt{2q-1},$ then the lattice ${\mathcal{L}}_k$ is VFK.
\end{proposition}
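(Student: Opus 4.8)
The plan is to exploit the product structure of $\mathcal{L}_k$: I would reduce the whole statement to a single planar lattice and write down an obtuse superbasis there explicitly.

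\emph{Splitting off a $2$-dimensional lattice.} A point $(\mathbf a,\mathbf b)\in\mathbb{Z}^{2N}$ belongs to $\mathcal{L}_k$ exactly when $b_i\equiv -k\,a_i\pmod q$ for each $i$, and this congruence links only the coordinate $a_i$ with $b_i$. Pairing coordinate $i$ with coordinate $N+i$ therefore identifies $\mathcal{L}_k$ with the orthogonal direct sum $\Lambda_k^{\oplus N}$, where
\[\Lambda_k=\{(a,b)\in\mathbb{Z}^2 : b\equiv -ka\pmod q\}=\langle(1,-k),(0,q)\rangle\]
has covolume $q$. I would then prove the general principle that an orthogonal direct sum of VFK lattices is VFK: given obtuse superbases $\{\mathbf a_0,\dots,\mathbf a_r\}$ and $\{\mathbf a_0',\dots,\mathbf a_s'\}$ of the two summands, the family $\{\mathbf a_0+\mathbf a_0'\}\cup\{\mathbf a_1,\dots,\mathbf a_r\}\cup\{\mathbf a_1',\dots,\mathbf a_s'\}$ is an obtuse superbasis of the sum; the superbasis identity holds because the two inner sums telescope, while every required inner product is non-positive since cross-summand products vanish and $(\mathbf a_0+\mathbf a_0')\cdot\mathbf a_i=\mathbf a_0\cdot\mathbf a_i\le0$. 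Applying this across the $N$ blocks reduces the Proposition to showing that $\Lambda_k$ is VFK whenever $k\le\sqrt{2q-1}$.

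\emph{The planar superbasis.} Set $\theta=kq/(1+k^2)$, let $m=\lfloor\theta\rfloor$ and $\varepsilon=\theta-m\in[0,1)$, and take
\[\mathbf s_1=(1,-k),\qquad \mathbf s_2=(m,\,q-mk),\qquad \mathbf s_0=-\mathbf s_1-\mathbf s_2.\]
Since $\det(\mathbf s_1,\mathbf s_2)=q$, the pair $\{\mathbf s_1,\mathbf s_2\}$ is a basis of $\Lambda_k$. A direct computation (using $kq=\theta(1+k^2)$) yields the three Selling parameters
\[\mathbf s_1\cdot\mathbf s_2=-(1+k^2)\varepsilon,\qquad \mathbf s_0\cdot\mathbf s_1=(1+k^2)(\varepsilon-1),\qquad \mathbf s_0\cdot\mathbf s_2=-\Big(\tfrac{q^2}{1+k^2}+(1+k^2)(\varepsilon^2-\varepsilon)\Big).\]
The first two are $\le 0$ for every $\varepsilon\in[0,1)$, so the only genuine condition is $\mathbf s_0\cdot\mathbf s_2\le0$, i.e. $\tfrac{q^2}{1+k^2}+(1+k^2)(\varepsilon^2-\varepsilon)\ge 0$. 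Since $\varepsilon^2-\varepsilon$ attains its minimum $-\tfrac14$ at $\varepsilon=\tfrac12$, this holds for all admissible $\varepsilon$ exactly when $\tfrac{q^2}{1+k^2}\ge\tfrac{1+k^2}{4}$, that is $(1+k^2)^2\le 4q^2$, which is precisely $k^2\le 2q-1$.

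\emph{Where the difficulty sits.} The first two Selling parameters being non-positive is forced by the choice of $m$ as the floor of $\theta$, so they carry no information; the entire content of the Proposition lives in the third parameter. The step I expect to require the most care is the algebraic identity $|\mathbf s_2|^2=\tfrac{q^2}{1+k^2}+(1+k^2)\varepsilon^2$, which collapses $\mathbf s_0\cdot\mathbf s_2$ to the clean closed form above and makes the bound $\sqrt{2q-1}$ fall out of the worst case $\varepsilon=\tfrac12$; setting up the orthogonal direct-sum reduction precisely is the other point to get right. I would finally check a boundary instance (e.g. $(k,q)=(3,5)$, where $\varepsilon=\tfrac12$ and $\mathbf s_0\cdot\mathbf s_2=0$) to confirm the constant is sharp \emph{for this superbasis}, while noting that the hypothesis is sufficient but not necessary, since every planar lattice is of Voronoi's first kind.
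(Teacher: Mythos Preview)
Your argument is correct and, at its core, is the same construction as the paper's: the paper applies the unimodular change of basis $U_P$ with $P=\lfloor kq/(k^2+1)\rfloor$---which is your $m$---and obtains exactly the superbasis you build block-by-block; its three nontrivial inner products (\ref{vfk_inequality_1})--(\ref{vfk_inequality_3}) coincide with your three Selling parameters, and its discriminant condition $\Delta=(k^2+1)^2-4q^2\le 0$ is your worst-case bound at $\varepsilon=\tfrac12$.

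The organization differs. You first factor $\mathcal{L}_k$ as an orthogonal sum $\Lambda_k^{\oplus N}$ and isolate a clean general lemma (orthogonal direct sums of VFK lattices are VFK), so the whole computation happens in $\mathbb{R}^2$; the paper stays in $\mathbb{R}^{2N}$ throughout and observes a posteriori that the off-block products vanish. Your parametrization via $\varepsilon=\theta-m$ and the identity $|\mathbf{s}_2|^2=q^2/(1+k^2)+(1+k^2)\varepsilon^2$ makes transparent why the first two Selling parameters are automatically $\le 0$ and why the only content sits in the third---the paper reaches the same place through the quadratic $g(P)$ and its discriminant. The tradeoff: your route gives a reusable lemma and a sharper picture of where the bound comes from; the paper's route requires no auxiliary lemma and writes down the global Selling matrix $Q$ explicitly (which it later needs for the CVP algorithm). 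Your closing remark that every planar lattice is VFK is a nice addition the paper does not make.
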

\begin{proof}
We consider the unimodular matrix $U_P\in GL_{2N}({\mathbb{Z}})$ (for some integer $P$), 
\begin{equation}\label{matrix_change_basis}
U_P=\left[\begin{array}{c|c}
	I_N & {\textbf 0}_N  \\
	\hline
	PI_N & I_N   \\
\end{array}\right]=
\left[\begin{array}{cccccccc}
1 &     0 & \dots &   0      &  0   &  0   & \dots & 0  \\
0&     1 & \dots &   0       &   0&  0   & \dots & 0  \\
\vdots & \vdots  &  \ddots & \vdots &   \vdots & \vdots & \ddots & \vdots   \\
0 & 0 & \dots & 1            &   0    &   0  &  \dots &   0 \\
P & 0 &  \dots &    0        &     1     & 0     &  \dots &    0    \\
0 & P &  \dots &    0        &     0     & 1   &  \dots &    0    \\
\vdots & \vdots  &  \ddots & \vdots &   \vdots & \vdots & \ddots & \vdots   \\
0 & 0 &  \dots &    P        &     0     & 0   &  \dots &    1    \\
\end{array}\right].
\end{equation} 
		
We shall prove that we can find an integer $P$ such that, the rows of the matrix $U_PM_k$ is an obtuse superbasis of ${\mathcal{L}}.$ This integer $P$ will depend on $k$ and $q,$

We first compute the matrix
$$A = U_PM_k=$$
\begin{equation}\label{vfk-matrix}
 \left[\begin{array}{c|c}
I_N & -kI_N  \\
\hline
PI_N & (q-Pk)I_N   \\
\end{array}\right].
\end{equation}
The superbasis is 
$$	
\begin{array}{ccc}
{\textbf v}_1 & = &(1,0,...,0;-k,0,...,0)\\
{\textbf v}_2 & =  & (0,1,...,0; 0,-k,...,0)\\
... \\
{\textbf v}_N & = &(0,0,...,1;0,0,...,-k) \\
{\textbf v}_{N+1} & = &(P,0,...,0;q-Pk,0,...,0) \\
...\\
{\textbf v}_{2N} & = & (0,0,...,P;0,0,...,q-Pk)\\
\end{array}
$$ 
and
$${\textbf v}_0 = -\sum_{j=1}^{2N} {\textbf v}_j=(-1-P,...,-1-P;kP+k-q,...,kP+k-q).$$
Since ${\textbf v}_i$ and ${\textbf v}_{i+N}$ have non zero elements only at positions $i, N+i$ we conclude that
$${\textbf v}_i\cdot {\textbf v}_{j+N}=0,\ (i\not=j,\ 1\leq i,j\leq N).$$
Further,
\begin{equation}\label{vfk_inequality_1}
{\textbf v}_1\cdot {\textbf v}_{N+1} = \cdots={\textbf v}_i\cdot {\textbf v}_{i+N}=\cdots=(kP - q)k + P, \ (1\leq i \leq N).
\end{equation}			 
Similarly, 
$${\textbf v}_i\cdot {\textbf v}_j=0, \ \ {\textbf v}_{i+N}\cdot {\textbf v}_{j+N}=0,\ (1\leq i<j\leq N),$$
\begin{equation}\label{vfk_inequality_2}
{\textbf v}_0\cdot {\textbf v}_i =  -(kP + k - q)k - P - 1,\ (1\leq i\leq N),
\end{equation}			 
and
\begin{equation}\label{vfk_inequality_3}
{\textbf v}_0\cdot {\textbf v}_{i+N} =  (kP + k - q)(q-kP) - (P + 1)P,\ \ (1\leq i\leq N).
\end{equation}		
From (\ref{vfk_inequality_1}), in order to have ${\textbf v}_i\cdot {\textbf v}_{i+N}\leq 0$ \ $(1\le i\le n),$  we get 
$$P\le \beta=\frac{kq}{k^2+1}$$
and similarly from (\ref{vfk_inequality_2}) we get $$P\ge \alpha=\frac{kq-k^2-1}{k^2+1}.$$
Remark that,	
 $\beta - \alpha = 1.$ So, there is at least one   integer value in 
$[\alpha,\beta],$ the value 
\begin{equation}\label{value_of_P}
P=\lfloor kq/(k^2+1)\rfloor.
\end{equation}
Assume that $\beta\not\in {\mathbb{Z}}.$
Else, $P=\beta.$ 
Consider the polynomial, 
$$g(x) = -x^2(1+k^2)+x(-k^2+2kq-1)+q(k-q).$$
Therefore, it is enough the relation (\ref{vfk_inequality_3}) be non positive at $P$, i.e. $g(P)\leq 0.$
But the discriminant of $g(x)$ equals to $\Delta = (k^2+1)^2-4q^2.$ So, to have $g(x)\leq 0$ it is enough $\Delta\leq 0,$ therefore we conclude $(1+k^2)^2\le 4q^2.$ The result follows.
\end{proof}
\begin{remark}\label{remark:superbasis}
We set $r = - [(kP+k-q)k+P+1]$ and $$s=(kP+k-q)(q-kP)-(P+1)P.$$ 
The Selling parameters $q_{ij}={\textbf v}_i\cdot {\textbf v}_j,$ $0\le i,j\le 2N$ are given by the symmetric $(2N+1)\times (2N+1)$ matrix, $Q=(q_{ij})=$
$$=\left[\begin{array}{c|c|c}
N\big((1+P)^2+(k(P+1)-q)^2\big) & r{\textbf 1}_{1\times N}  & s{\textbf 1}_{1\times N} \\
\hline
r{\textbf 1}_{N\times 1}  & (1+k^2)I_N & \big((kP-q)k+P\big)I_N  \\
\hline
s{\textbf 1}_{N\times 1} &  \big((kP-q)k+P\big)I_N & \big(P^2+(q-Pk)^2\big)I_N   \\
\end{array}\right].
$$
where ${\textbf 1}_{1\times N}$ is a row with $N$ ones and ${\textbf 1}_{N\times 1}$ a column with $N$ ones.

\end{remark}
\begin{remark}
The maximum
%\footnote{see \url{https://github.com/drazioti/ntru_msg_recovey_attack/blob/main/auxiliary/find_max_k_and_P.sage}} 
$k$ such that $g(P)<0$ for various $q$

\begin{table}[h]

\begin{tabular}{|l||l|l|l|l|l|l||l|l||l|l|l|}
\hline 
$q$ & 32 & 64 & 128 & 256 & 512 & 1024 & 2048 & 4096 & 4621 & 4591 & 5167\\ \hline
$k$ & 8 & 11 & 16 & 23 & 32 & 47 & 64 & 91 & 101 & 98 &106\\ \hline
$P$ & 3 & 5 & 7 & 11 & 15 & 21 & 31 & 45 & 45 & 46 &48 \\  \hline
\end{tabular}
\caption{The values $q = 2048, 4096$ concern parameters for NTRU-HPS and the last three columns for the NTRU-Prime.}
\label{Tab:1}
\end{table} 

\end{remark}
\section{The Attack}\label{sec:attack}
Our method is adaptable for all three variations of NTRU. Across the three variants, there exists a ternary message $m(x),$ encrypted according to the formula: $c(x) = r(x)*h(x)+m(x)$ in ${\mathcal{R}}/q$, where $r(x)$ represents a nonce and $h(x)$ is the public key. In the case of NTRU-Prime, this message is defined by the equation $m(x) = {\texttt{Round}}(h(x) * r(x)) - h(x) * r(x)$ (see remark \ref{tricky}). 

In our method, we attack the message $m(x)$ of the system. So, if our attack is successful, we get the message. Now, having the message, we need to compute the nonce $r(x),$ since the shared key is a hash of the nonce and the ciphertext. It is enough to use the equation $r(x)=(c(x)-m(x))*h^{-1}(x)$ in ${\mathcal{R}}/q.$

Eve, who attacks the system, has the ciphertext $c(x)$ and the public key $h(x).$ Her purpose is to find the shared key ${\bf K}$. As we wrote, in ${\mathcal{R}}/q$ we have $$c(x)=r(x)*h(x)+m(x).$$
 Multiplying both sides with an integer $k$ (we shall choose $k$ later), we get, 
$$km(x)=b(x)+u(x),\text{ where \ } b(x)=k c(x)\ \text{\ and\ \ } u(x)=-kr(x)*h(x) \ \text{in}\ {\mathcal{R}}/q.$$
Therefore,  the previous equation in ${\mathcal{R}}$ is written
$$km(x)=b(x)+u(x)+qv(x),\text{ for\ some\ } v(x)\in {\mathcal{R}}.$$
Polynomials $m(x)$ and $u(x)$ are unknown. Let ${\bf m}=(m_i), {\bf b}=(b_i), {\bf u}=(u_i),$ and ${\bf v}$ be the vectors corresponding to $m(x), b(x), u(x),$ and $v(x),$ respectively.    We set ${\textbf V}$ to be the unknown vector $(-{\textbf m},{\textbf u}).$
We remark that 
$(-{\textbf m},{\textbf b}+{\textbf u})$ is in ${\mathcal{L}}_k,$
where ${\mathcal{L}}_k$ is the lattice generated by the rows of the matrix
\begin{equation}\label{vfk_lattice}
	M_k=
    \left[\begin{array}{c|c}
	I_N & -kI_N  \\
	\hline
	{\textbf 0}_N & qI_N   \\
	\end{array}\right].
\end{equation}  
Indeed, if we consider $(-{\bf m},-{\bf v})\in {\mathbb{Z}}^{2N},$ then
$$({-\textbf m},-{\textbf v})M_k=
(-{\textbf m},-{\textbf v})\left[\begin{array}{c|c}
			I_N & -kI_N  \\
			\hline
			{\textbf 0}_N & qI_N   \\
		\end{array}\right] = 	 (-{\textbf m},k{\textbf m} - q{\textbf v})=(-{\textbf m},{\textbf b}+{\textbf u}).$$
  Now, we want a nice approximation of the unknown vector ${\bf V}.$
Assume that we can find a vector ${\textbf E}=(E_i)\in {\mathbb{Z}}^{2N}$ such that,
\begin{equation}\label{rel:1}
||{\textbf V}-{\textbf E}||<\frac{1}{2}\lambda_1,\ {\rm where\ } \lambda_1\ {\rm is\ the\ length\ of\ a\ shortest\ vector\ in\ } {\mathcal{L}}_k.
\end{equation}
  Note that, neither ${\bf V}$ nor ${\bf E}$ is in ${\mathcal{L}}_k.$ Further, we remark that the first successive minima $\lambda_1$ can be computed in polynomial time in ${\mathcal{L}}_k$ since it is VFK \cite{VFK-SVP}.
We choose the target vector ${\bf t}$ through ${\bf E}$ as follows,
$${\textbf t}=(E_1,...,E_N,b_1+E_{N+1},...,b_N+E_{2N}),$$
and set ${\textbf W}\leftarrow CVP({\mathcal{L}}_k,{\textbf t}).$ We shall prove that ${\bf W}$ provides the message ${\textbf m}.$ First, we remark that
$$||{\textbf W} - {\textbf t}||\le ||(-{\textbf m},{\textbf b}+{\textbf u})-{\textbf t}||.$$ Then,
$$\|(-{\bf m},{\bf b}+{\bf u})-{\bf t})\| =$$
$$=\|(-m_1,...,-m_N,b_1+u_1,...,b_N+u_N)-(E_1,...,E_N,E_{N+1}+b_1,...,E_{2N}+b_{N})\|=$$
$$=\|(-m_1-E_1,...,-m_N-E_N,u_1-E_{N+1},...,u_N-E_{2N})\|=$$
$$=\|(-{\bf m},{\bf u})-{\bf E}||=||{\textbf V}-{\textbf E}\|<\frac{1}{2}\lambda_1.$$
\begin{remark}\label{remark:svp}
We remark here that, since ${\mathcal{L}}_k$ is VFK lattice, we can also compute $\lambda_1$ in polynomial time (see \cite{VFK-SVP}). We have implemented this algorithm in sagemath, see appendix \ref{appendix:svp-vfk}. By inspection we note that $\lambda_1=\sqrt{1+k^2}.$
\end{remark}
Finally, 
$$\|{\textbf W}-(-{\textbf m},{\textbf b}+{\textbf u})\|=\|({\bf W}-{\bf t}) + \big({\bf t}-(-{\bf m},{\bf b}+{\bf u})\big)\|$$
$$\le \|{\textbf W}-{\textbf t}\| +  \|{\bf t} -(-{\textbf m},{\textbf b}+{\textbf u})\|\le 2\|(-{\textbf m},{\textbf b}+{\textbf u}) - {\textbf t}\|<\lambda_1.$$
But ${\textbf W}-(-{\textbf m},{\textbf b}+{\textbf u})\in {\mathcal{L}}_k$ thus, ${\textbf W}=(-{\textbf m},{\textbf b}+{\textbf u})$. We conclude therefore that the first $N-$coordinates of ${\textbf W}$ provide the (minus) message ${\textbf m}.$ Having the message ${\bf m},$ i.e., the ternary polynomial $m(x),$ we can find the nonce $r(x)$ since $c(x)=r(x)*h(x)+m(x).$ Then, computing the hash of 
${\texttt{encode}}({r(x)})||{\texttt{encode}}({c(x)}),$ Eve calculates the shared secret ${\bf K}.$

To summarize: \\
$({\rm{\bf{i}}}).$ Find a positive integer $k$ and $P$ such that: the lattice ${\mathcal{L}}_k$ generated by the rows of the matrix (\ref{vfk_lattice}) is VFK. $P$ is chosen as in (\ref{value_of_P}). This $k$ depends on the parameters of the NTRU, especially $q.$  For instance, if $q=1024$ we choose $k=47$ (from Table \ref{Tab:1}),\\
$({\rm{\bf{ii}}}).$ Choose ${\bf E}=(E_i)\in {\mathbb{Z}}^{2N}$ in order to construct the target vector, 
$${\textbf t}=(E_1,...,E_N,b_1+E_{N+1},...,b_N+E_{2N})$$
where ${\bf b}=(b_i)$ corresponds to polynomial $b(x)=kc(x),$ and $c(x)$ is the ciphertext. I.e. ${\bf b}$ is a known vector. In our attack we chose $E_1=E_2=\dots=E_N=0.$ Since ${\bf 0}_N$ is a {\it{good}} approximation of ${\bf m}\in\{-1,0,1\}^N.$ So, ${\bf E}=({\bf 0}_N,E_{N+1},...,E_{2N})$ and 
$${\textbf t}=(0,...,0,b_1+E_{N+1},...,b_N+E_{2N}),$$
$({\rm{\bf{iii}}}).$ Call the VFK variant of CVP for the pair $({\mathcal{L}}_k,{\bf t}).$ In order to execute CVP for VFK lattices, we also need the Selling parameters $Q$ (given in remark \ref{remark:superbasis}),\\
$({\rm{\bf{iv}}}).$ Compute the possible message $m(x),$\\
$({\rm{\bf{v}}}).$ Compute the possible shared secret ${\bf K}$,\\
$({\rm{\bf{vi}}}).$ Verify that  ${\bf K}$ is the shared secret.\\\\
To model our attack in step $({\rm{\bf{ii}}})$, we use a range $R,$ which is a positive integer. We assume the existence of an oracle such that on input the public key and the range $R$, outputs $E_1'=E_{N+1},...,E_{N}'=E_{2N}$: $|E_{i}'-u_i|\le R$ for $i=1,...,N.$ In other words, the polynomial 
$$u(x)-\sum_{i=1}^{N}E_{i}'x^{i-1}=-k(h(x)*r(x))-\sum_{i=1}^{N}E_{i}'x^{i-1}$$
has all its coefficients in $\{-R,...,R\}.$
The attack may fail if our guessing for vector ${\bf E}'=(E_1',...,E_N')$ is not good, or, in other words, if $R$ is large enough. Through our experiments, we aim to determine an upper bound, denoted as $R_0,$ beyond which our attack is no longer effective. Each call to the oracle with a different seed provides a new random set 
$\{E_i'\}$ with the same property. We assume that the distribution of $E_i'$ with this property is uniform.

To implement step $({\rm{\bf{vi}}})$, we start some encrypted conversation with the owner of the secret key, by using a symmetric cryptosystem and so indirectly, we can check the validity of the shared key.
\begin{remark}
In NTRU-HPS, where we use messages with weight $q/8-2,$ we get
$$\|{\bf V}-{\bf E}\|^2=\sum_{i=1}^Nm_i^2 + \sum_{i=1}^N(u_i-E_i')^2<\frac{q}{8}-2+NR^2.$$
So, if $$\frac{q}{8}-2+NR^2<\frac{\lambda_1^2}{4},\ \text{i.e.} \ R^2<\frac{1}{N}\Big(\frac{\lambda_1^2}{8}-\frac{q}{8} + 2\Big),$$
the attack will succeed. Furthermore, if we use remark \ref{remark:svp}, and we set $\lambda_1^2=1+k^2,$ we get
$$R^2<\frac{1}{N}\Big(\frac{1+k^2}{8}-\frac{q}{8} + 2\Big).$$
\end{remark}
\subsection{More remarks about our attack}
In our approach, we assume that we do not have any information about the message ${\bf m}.$ However, we do presume that we have some degree of information about (a multiple of) the nonce $r(x).$ 

The challenge in our attack lies in approximating two key unknowns: the message itself and the specific multiple of the nonce, denoted as $-kh(x)*r(x)$. 
This combined unknown vector is represented as ${\bf V}=(-{\bf m},{\bf u})$, where ${\bf m}$ signifies the message and ${\bf u}$ corresponds to $-kh(x)*r(x)$.
We approximate the message with the zero vector (since $m(x)$ is a ternary polynomial) and to quantify how much information we know about the nonce, we use the range $R.$ Now, as $R$ increases, the amount of information we obtain about the nonce diminishes significantly, but the information we have about the message remains zero. In practice, obtaining such an oracle can be a challenging task. Nevertheless, it's essential to acknowledge that side channel attacks, as exemplified in \cite{kamal}, play a significant role in the assessment of Post-Quantum Cryptosystems like NTRU. Therefore, it is conceivable to construct the specific oracle through a side channel attack.

\section{Experiments}\label{sec:experiments}
We execute experiments for NTRU-HPS and NTRU-Prime, where we implemented encryption-decryption (or encap-decap) algorithms. First, we choose $k, P$, so that ${\mathcal{L}}_k$ is a VFK lattice. The oracle accepts the public key, a seed, and the range $R,$ and outputs $N$ integers $E_i',$ such that $|u_i-E_i'|\le R \ (1\le i\le N)$. Finally, we define the target vector ${\bf t}=({\bf 0}_N,E_1',...,E_N')$ and the VFK-CVP algorithm that is executed for the pair $({\mathcal{L}}_k,{\bf t}).$ For each message, we make $100$ calls to the oracle and if we manage to find the message ${\bf m}$, we consider our attack successful. 

In the following subsection, we update the experimental results of \cite{adam_draz} and in subsection \ref{subsec:ntru-prime} we provide some new results concerning NTRU-Prime.
\subsection{NTRU-HPS: Comparison with results of \cite{adam_draz}}\label{subsec:ntru-hps}
In the example 1 of \cite{adam_draz}, $(N,p,q,d)=(239,3,2^{8},71)$ are used and the attack provides range $R_0=9$. With our enhanced method,
%\footnote{see \href{https://github.com/drazioti/ntru\_msg\_recovey\_attack/tree/main/ntru-hps/comparison\_with\_older\_results}{here.}}, 
using a CVP oracle, instead of an approximation CVP oracle, we get $R_0=11.$

Using Babai's approximation algorithm for CVP on the previous VFK lattice, we get similar results. This indicates a notable consistency in outcomes, even with the distinct methodologies employed in each scenario.

Additionally, when we apply a VFK-style matrix, like the one used in CVP, along with Babai's approximation algorithm, we achieve similar results. It appears that Babai's algorithm works particularly effectively in these specific VFK lattices. %	See also Fig.\ref{Fig.3}, which tell us for the previous parameters the closest vector and the output of Babai's algorithm is very close.

For our next experiment, we utilized the parameters $(N,p,q,d)=(509,3,2048,10)$. This example, also mentioned in \cite[example 5]{adam_draz}, is especially noteworthy as it pertains to the {\texttt{ntruhps2048509}}. According to \cite{adam_draz}, the maximum range $R=R_0$ for this case is limited to $26.$ 
Our new attack, that utilizes the min-cut algorithm, has considerably improved this limit to $R_0=32$. It is worth noting that using the min-cut algorithm has been an improvement over the Babai algorithm which achieves a range of $R_0=31$, which is not a trivial difference. To see this, consider two $N$ dimensional cubes both centered at ${\bf u}$. One has a side length of $2R_0$, and the other has a side length of $2(R_0+1)$. In ${\mathbb{Z}}^N$, the difference in the number of points inside these cubes is $(2R_0+3)^N - (2R_0+1)^N$, showing a significant exponential increase in the number of points.

For $(N,p,q,d)=(677,3,2048,20)$ i.e. concerns {\texttt{ntruhps2048677}}, the experiment in \cite[example 6]{adam_draz} provides $R_0=17$ and our new attack provides $R_0=32.$ 

{\bf Some new results}. We also executed the attack to {\texttt{ntruhps4096821}} where we found $R_0=45.$ In this experiment, it's important to note that we strictly adhere to the variant of NTRU-HPS submitted to NIST. While our previous examples (in the previous paragraph) employed older parameter sets and unweighted messages, these differences don't appear to significantly impact the success of our attack.
\subsection{NTRU-prime}\label{subsec:ntru-prime}
For NTRU-prime
%\footnote{For NTRU-prime implementation we used descent code from \url{https://ntruprime.cr.yp.to/sntrup4591761.sage}, but without encoding-decoding functions. For details see \url{https://github.com/drazioti/ntru_msg_recovey_attack}} 
we tried the following parameters : 
\begin{center}
{\texttt{sntrup653}}, {\texttt{sntrup761}} and {\texttt{sntrup867}}.  
\end{center}
The flavors {\texttt{sntrup653}}, {\texttt{sntrup761}} and {\texttt{sntrup867}}, have $(p,q,w)=$ $(653,4621,288),$ $(761,4591,286),$ and $(857,5167,322),$ respectively. As previously, with the term {\it{attack}}, we mean, $100$ calls to the oracle (for a fixed message). Our attack revealed the messages for {\texttt{sntrup653}} with maximum range $R_0=50$; for {\texttt{sntrup761}}, it's $48$ and for {\texttt{sntrup867}} it's $53.$ Each call to the CVP-oracle for VFK lattices took on average 4-5 minutes (wall time), on a modern PC.

One might anticipate that {\texttt{sntrup653}} and {\texttt{ntruhps2048677}} would yield similar outcomes due to their comparable dimensions. However, it appears that NTRU-HPS displays greater resilience against our attack, since, $R_0({\rm{HPS}})=32$ is much smaller than to $R_{0}({\rm{Prime}})=50.$ However, given the substantial differences between these two variants, definitive conclusions about their comparative resilience require further study.

\begin{remark} In our investigation, we successfully executed attacks on NTRU-HPS and NTRU-Prime however, we did not extend this analysis to NTRU-HRSS. Our rationale behind this decision lies in our expectation of consistent outcomes similar to the previous versions. Upon closer examination of the attack methodology and its application, we discern that its efficacy isn't contingent on the specific nature of the underline ring. Rather, the critical factor for success hinges on acquiring information about the approximation of the vector ${\bf b} + {\bf u}$. Leveraging an Oracle allows us to obtain this information, somewhat circumventing the ring's inherent properties. Consequently, the NTRU problem transforms into a lattice problem, primarily tied to the lattice's basis rather than the ring's characteristics.
\end{remark}

\section{Conclusion}\label{sec:conclusions}
In this work, we used lattice theory to deliver a message recovery attack in all the NTRU variants. Multiplying the encryption equation with a suitably chosen constant we create an equivalent equation, where an unknown value, containing the message, it belongs to a VFK lattice. This allows us to use polynomial time algorithm for SVP and CVP, where we then manage to exploit a message recovery attack with the help of an oracle. Then, a target vector is carefully selected, and the CVP oracle is invoked. If some conditions are fulfilled we get the message, which allows to find the shared key.  Usually, an attack targets the key of the cryptosystem, but here we execute a message recovery attack, which are rare in the bibliography. Additionally,  the connection between NTRU and VFK is new and hasn't been explored in the bibliography that exists today.

 The hard part is finding a vector ${\bf E}$, that's an approximation of the one we don't know. To do this, we need help from a specific kind of oracle. This aspect presents a limitation to the attack as obtaining such an oracle in practical scenarios proves to be challenging. To mitigate this, we can employ a side-channel attack.
Several studies, such as \cite{aydin,kamal,vizev}, have explored side-channel attacks in this context. Especially in \cite[chapter III(A)]{sidechannel}, the authors suggest a power-based side-channel attack that targets the random generation of polynomials in NTRU.
By exploiting a plaintext message attack and collecting numerous pairs $(msg,ct)$, we can identify any potential biases in the distribution of the nonce $r(x)$ computed by $r(x)=h^{-1}(x)(msg(x)-ct(x)),$ if such biases exist.

Finally, we illustrated the success of the attack through several examples and proposed how a value $R_0$ could possibly be used to indicate the potential resistance of NTRU to the presented attack.

\appendix
\newpage
\section{Pseudocode for CVP in VFK lattices}\label{appendixA}

%% This is needed if you want to add comments in your algorithm with \Comment
\RestyleAlgo{ruled}

\SetKwComment{Comment}{/* }{ */}
\begin{algorithm}[H]
\caption{Solving CVP in VFK lattices}\label{alg:MAINALGO}
% \KwData{$n \geq 0$}
% \KwResult{$y = x^n$}
\SetKwInOut{Input}{input}
\SetKwInOut{Output}{output}
\Input{$B \in \mathbb{N}^{^{{n+1}\times m}} $ matrix with superbasis vectors $\{\textbf{b}_i\}, i \in \{1,...,n+1\}$ as rows and $\textbf{y} \in \mathbb{R}^{1 \times m} $ the target vector}
\Output{A closest vector $\textbf{v} \in \L$ of $\textbf{y}$, where $\L$ the lattice generated by the rows of $B$}
$q_{ij} \gets \textbf{b}_i\cdot \textbf{b}_j$ \Comment*[r]{for $i,j = 1, ..., n+1, i \neq j $}
${\bf q} \gets (q_{ij})$ \ ;\\
find ${\bf z}\in {\mathbb{R}}^{1\times (n+1)}$ such that $\textbf{y} = {\bf z}B$ \ ;\\ 
${\bf u} \gets \lfloor {\bf z} \rfloor$\;
\For{$i=1$, \textbf{to} $n+1$} {
  $G\gets flow\_network({\bf z},{\bf u}, {\bf q})$ \Comment*[r]{see alg. 3}
  ${\bf t}\gets mincut(G)$ \Comment*[r]{see alg. 4}
  ${\bf u}\gets {\bf u}+{\bf t}$\;
}
\end{algorithm}

\SetKwComment{Comment}{/* }{ */}

Now we provide the pseudocode of flow network.\ \\

\begin{algorithm}[h!tb]
\caption{Flow Network}\label{alg:GRAPH}
\SetKwInOut{Input}{input}
\SetKwInOut{Output}{output}
\Input{target vector $\textbf{z} \in \mathbb{R}^{n+1 \times 1}$, $\textbf{u} \in \mathbb{N}^{n+1 \times 1} $, selling parameters $q_{ij}$}
\Output{An undirected flow network $G$}
${\bf p} \gets {\bf z} - {\bf u}$\;
Create Graph $G$ with $n+3$ nodes and edges connecting them\;
source $\gets$ first node\;
sink $\gets$ last node\;
Edges $e_{ij}$ have the following weights: \\
\For{$i=1$, \textbf{to} $n+1$}{
\For{$j=i+1$, \textbf{to} $n+1$}{
$w_{ij} = - q_{ij}$\;
}
$s_i = -2 \sum_{j=1}^{n+1} q_{ij} p_j$\;
\eIf{$s_i \geq 0$}{
    $w_{i,sink} \gets s_i$\;
    $w_{source,i} \gets 0$\;
}{
    $w_{i,sink} \gets 0$ \;
    $w_{source,i} \gets -s_i$\;
}
}
\Return{$G$}
\end{algorithm}

The minimum cut algorithm of \ref{alg:MINCUT}, uses the Ford-Fulkerson procedure \cite{MAXFLOW} for undirected graphs, that finds the max-flow of a graph, which is equivalent to the minimum cut \cite{MAXFLOWMINCUT}. $f(i,j)$, for $i,j \in E(G)$ describes the flow from the node $i$ to node $j$. In our experiments we are using the implementation of networkx's mincut algorithm\footnote{see \url{https://networkx.org/}}.\ \\

\SetKwComment{Comment}{/* }{ */}
\begin{algorithm}[H]
\caption{MinCut}\label{alg:MINCUT}
\SetKwInOut{Input}{input}
\SetKwInOut{Output}{output}
\Input{Graph $G$, source node $s$, sink node $t$}
\Output{vector $\textbf{t} \in \{0,1\}^{n+1}$}

\For{each edge $(u,v) \in E(G)$}{
 $f(u,v) \gets 0$\;
 $f(v,u) \gets 0$\;
}
\While{ $\exists$ a path $p$ from $s$ to $t$ with no cycles in the residual network $G_f$}{
    $c_f(p) \gets \min\{ c_f(u,v) : (u,v) \in p \}$ \Comment*[r]{residual capacity}
    \For{ each edge $(u,v) \in p$}{
    $f(u,v) \gets f(u,v) + c_f(p)$\;
    $f(v,u) \gets - f(u,v)$\;
    }
}
$C \gets$ all the vertices that exist on the max-flow path\;
\For{$i=1$, \textbf{to} $n+1$}{
\eIf{$v_i \in C$}{
    $t_i=1$\;
}{
    $t_i=0$\;
}
}
\Return{\textbf{t}}
\end{algorithm}

\section{Babai vs CVP on VFK}\label{appendixB}
\begin{figure}[h!tb]
\centering
\rotatebox{0}{\scalebox{0.5}{\includegraphics{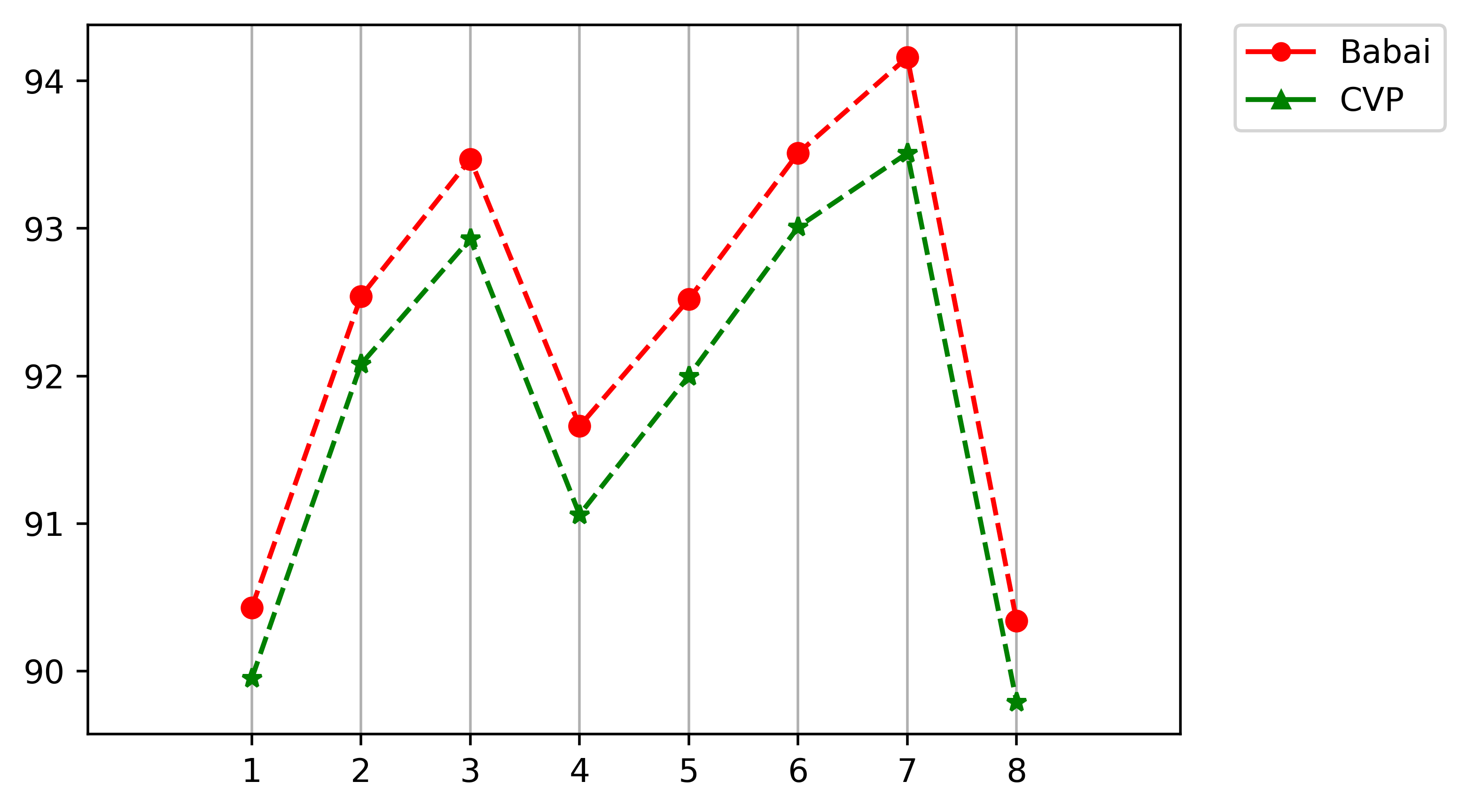}}}
\caption{We use the matrix (\ref{vfk-matrix}) with $N=677$ and parameters $(P,q,k)=(6,70,10).$ We execute Babai and CVP (using the min-cut algorithm) with various target vectors of the form ${\textbf t}\xleftarrow{\$} \{0,1\}^N \times \{-1000,...,1000\}^N.$ The $y-$axis contains $||{\textbf x}-{\textbf t}||,$ where ${\textbf x}$ is either the output from CVP or Babai.}
\label{Fig.1}
\end{figure}

\begin{figure}[h!tb]
\centering
\rotatebox{0}{\scalebox{0.5}{\includegraphics{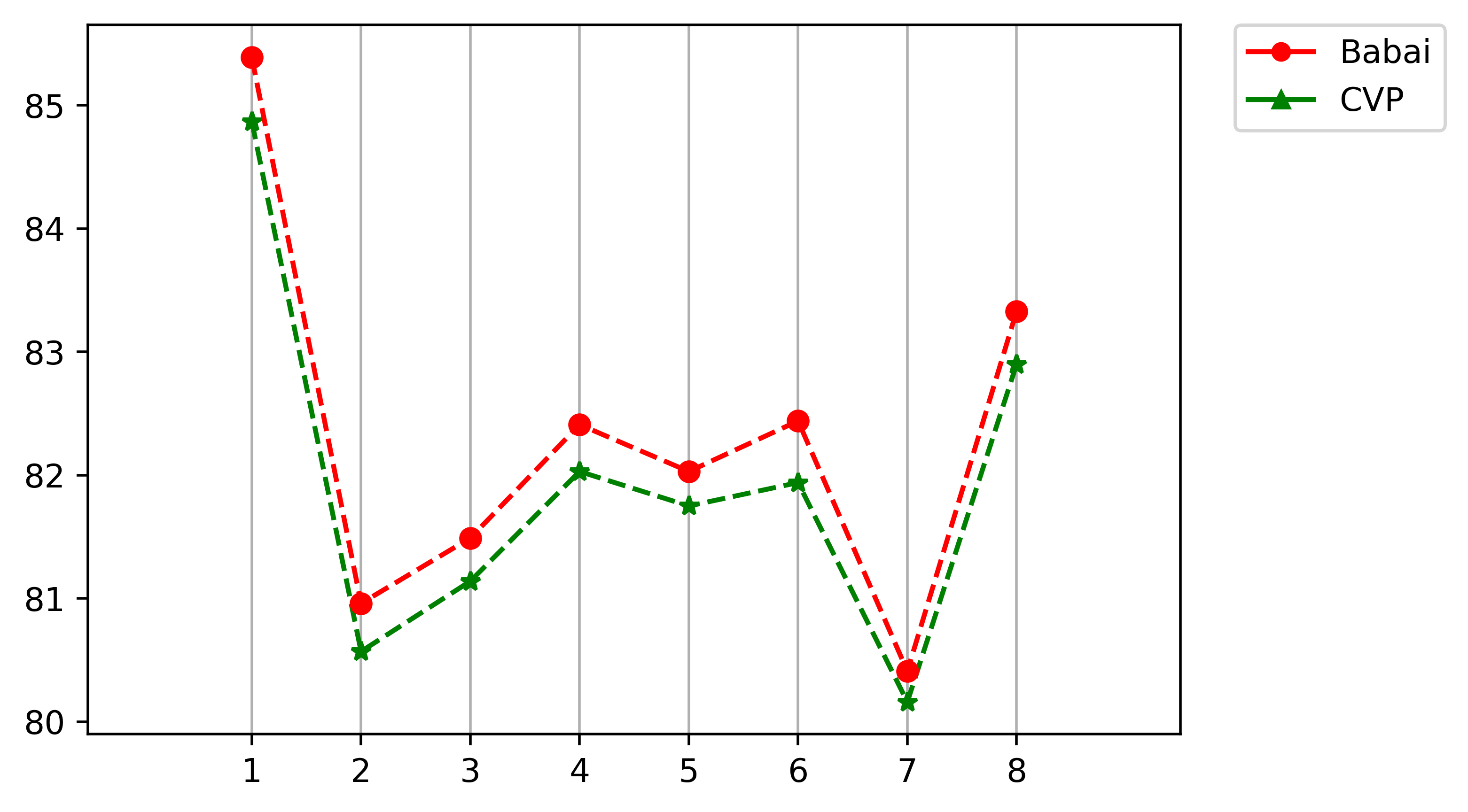}}}
\caption{We fix the matrix (\ref{vfk-matrix}) with $N=557$ and parameters $(P,q,k)=(6,70,10).$  We execute Babai and CVP (using the min-cut algorithm) with various  target vectors of the form ${\textbf t}\xleftarrow{\$} \{0,1\}^N \times \{-1000,...,1000\}^N.$ The $y-$axis contains $||{\textbf x}-{\textbf t}||,$ where ${\textbf x}$ is either the output from CVP or Babai. The $x-$axis counts the number of experiments.}
\label{Fig.2}
\end{figure}
\section{}\label{appendix:svp-vfk}

\begin{table}[htb!]

\begin{tabular}{|l||l|l|l|l|}
\hline 
variant & ntruhps2048509 & ntruhps2048677 & sntrup653 & sntrup761 \\ \hline
$N$ & 509 & 677 & 653 & 761\\ \hline
$q$ & 2048 & 2048 & 4621 & 4591\\ \hline
$k$ & 64 & 64 & 101 & 98  \\  \hline
$P$ & 31 & 31 & 45 & 46 \\  \hline\hline
$\lambda_1$ & 64.0078... & 64.0078.. & 101.004 & 98.005...\\ \hline
\end{tabular}
\caption{We computed the first successive minima $\lambda_1$ for the VFK lattices ${\mathcal{L}}_k$. Note that in all the examples $\lambda_1=\sqrt{1+k^2}.$ The code is in the github repository. }
\label{Tab:2}
\end{table} 
\end{document}